\documentclass{article}
\usepackage[utf8]{inputenc}
\usepackage{placeins}
\usepackage{bm}
\usepackage{bbm}
\usepackage{amsmath}
\usepackage{mathtools}

\usepackage{multirow}
\usepackage{amssymb}
\usepackage{comment}
\usepackage{amsthm}
\usepackage{geometry}
\usepackage{graphicx}
\usepackage{setspace}
\usepackage{lineno}
\usepackage{url}
\usepackage{xcolor}




\theoremstyle{definition}
\newtheorem{theorem}{Theorem}[section]

\newtheorem{lemma}{Lemma}[section]
\newtheorem{proposition}{Proposition}[section]
\newtheorem{assumption}{Assumption}
\newtheorem{definition}{Definition}
\newtheorem{remark}{Remark}

\title{Sequential change-point detection for binomial time series with exogenous variables}
\author{%
  Yajun Liu \footnote{Amazon.com, Seattle, WA, USA. Email: yajunliustacy@gmail.com}%
  \and Beth Andrews \footnote{Department of Statistics and Data Science, Northwestern University, Evanston, IL, USA. Email: bandrews@northwestern.edu}%
  }
\date{}
\date{}

\begin{document}

\title{Nonparametric Sequential Change-point Detection on High Order Compositional Time Series Models with Exogenous Variables}
\author{
  Yajun Liu \footnote{Amazon.com, Seattle, WA, USA. Email: yajunliustacy@gmail.com}%
  \and Beth Andrews \footnote{Department of Statistics and Data Science, Northwestern University, Evanston, IL, USA. Email: bandrews@northwestern.edu}
  }
\date{}





\maketitle              

\begin{abstract}
Sequential change-point detection for time series is widely used in data monitoring in practice. In this work, we focus on sequential change-point detection on high-order compositional time series models. Under the regularity conditions, we prove that a process following the generalized Beta AR($p$) model with exogenous variables is stationary and ergodic. We develop a nonparametric sequential change-point detection method for the generalized Beta AR($p$) model, which does not rely on any strong assumptions about the sources of the change points. We show that the power of the test converges to one given that the amount of initial observations is large enough. We apply the nonparametric method to a rate of automobile crashes with alcohol involved, which is recorded monthly from January 2010 to December 2020; the exogenous variable is the price level of alcoholic beverages, which has a change point around August 2019. We fit a generalized Beta AR($p$) model to the crash rate sequence, and we use the nonparametric sequential change-point detection method to successfully detect the change point.

\end{abstract}

\section{Introduction}
\label{sec: introduction}
In the literature for change-point detection, control charting methods are used to record a statistical process and reveal possible changes. By investigating the in-control process (the process known to be free from any change points), researchers develop appropriate control charts revealing certain properties of the process. Monitoring stopping rules for the control chart are designed so that an out-of-control signal can be detected as soon as possible if a change occurs. There are three main types of control charting methods, Shewhart, exponentially weighted moving average (EWMA) and cumulative sum (CUSUM).

Shewhart-type chart derives a baseline, a lower limit and an upper limit for the monitoring statistic according to its in-control distribution. An out-of-control signal will be sent if the statistic is outside either of the limits. Shewhart control charts based on various statistics are widely investigated, as in \cite{park1987nonparametric}, \cite{bakir2004distribution}, \cite{bakir2006distribution}, \cite{alloway1991control}, \cite{amin1995nonparametric} and \cite{vasilopoulos1978modification}. Shewhart-type charts are easy to implement in practice. However, as they rely on only the most recent observation, they are sensitive to large shifts but not good for small shifts \cite{hussein2013comparison}. Also, the correlation structure is hard to depict using Shewhart control charts. 

In contrast, EWMA is more able to detect small shifts. The test statistics incorporate not only the current observation but also the past observations. Once a new observation appears, EWMA gives different weights for the new observation and the average of the previous ones, and generates the new weighted average. For independent cases, EWMA charts based on various statistics are developed, see \cite{hack11992new}, \cite{hackl1991control}, \cite{amin1991nonparametric}, \cite{liu2013sequential} and \cite{zou2012spatial}. In \cite{schmid1997ewma}, the author extends EWMA charts to autocorrelated cases. EWMA charts for dependent cases can be also found in \cite{koehler2001ewma}, \cite{schmid1997some} and \cite{pawlak2004detecting}. EWMA, which evaluates both the current observation and the weighted average of the past observations, is sensitive to small shifts. However, given the rather simple formats of EWMA-type charts, it is hard to conduct complex change detection, such as distributional changes.

Another control chart method that is also sensitive to small shifts is CUSUM. In \cite{page1954continuous} and \cite{page1955test}, the author proposes the detection procedure: monitoring the fluctuation of a cumulative score $S_n$, which represents the fluctuation of the observed data points, and taking action once $S_n$ exceeds a predetermined threshold. CUSUM can reflect the behavior of a sequence in a period of time and can be used to detect small shifts. With the known mean of the in-control autocorrelated process, in \cite{kim2006new}, a model-free CUSUM is constructed based on the in-control mean to detect possible shifts in the mean. CUSUM monitoring correlation structure of strong mixing process is proposed in \cite{azizzadeh2014cusum}. In \cite{vera2004comparative} and \cite{hawkins2014cusum}, the authors compare the performance of EWMA and CUSUM when the mean and the variance of the in-control process are known. In \cite{zwetsloot2017head}, with unknown parameters, how CUSUM and EWMA respond to estimation errors is discussed. In \cite{vera2004comparative}, it is shown that EWMA responds faster for comparatively small shifts but CUSUM outperforms for persistent shifts. Also, as mentioned previously, EWMA charts for complex monitoring cases, like sequential monitoring distributional changes, are not well developed in the literature. Conversely, because of the flexibility of CUSUM, it can handle complex scenarios. More methods and comparisons of the control chart methods can be found in the review papers \cite{knoth2004control} and \cite{chakraborti2001nonparametric}. Therefore, in this work, we aim to develop CUSUM-type charts that can nonparametrically detect distributional changes for compositional time series. 

In the literature, time series recording proportions of a whole are commonly referred to as compositional time series. In practice, time series recording fluctuating proportions of a whole are commonly seen and attract interest in a lot of fields, like economics, sociology, etc. For example, variation in the proportion of male/female babies born during wartime v.s. peacetime has drawn researchers' attention and has been widely investigated, e.g. \cite{chao2019systematic}. \cite{aitchison1982statistical} is considered as the pioneering research for compositional data as it summarizes several common transformations for modeling compositional data. In \cite{brunsdon1998time}, the authors apply additive logratio (alr) transformation to build a vector ARMA compositional model. based on alr, a state space model for compositional time series is developed in \cite{silva2001modelling}. Other transformations like Box-Cox transformation, centered logratio (clr) transformation, are also developed. Literature on other transformation schemes is thoroughly introduced in the review paper \cite{larrosa2017compositional}. In \cite{liu2024sequential}, the authors propose the generalized Beta AR($p$) model, which incorporates exogenous variables. With transformation satisfying certain regularity conditions, the authors show the stationarity and ergodicity of the process when $p=1$. A parametric sequential change point detection method is proposed for the compositional time series under the assumption that there is no change point in the exogenous variables. In this work, we develop a nonparametric method capable of sequentially detecting change points in the generalized Beta AR($p$) model, irrespective of the source of the change point.

In Section \ref{sec: Model definition}, we prove the stationarity and ergodicity of processes following the generalized Beta AR($p$) model when $p > 1$. In Section \ref{sec: monitoring scheme and null}, we introduce the test statistic used for the change-point detection and prove the asymptotic distributions of the test statistic under the no change-point null hypothesis. In Section \ref{sec: power of the test statistic}, we prove the power of the test statistic converges to one as the number of initial observations converges to infinity. In Section \ref{sec: Application to monthly alcohol involvement automobile crash rate}, we test the performance of the nonparametric sequential change-point detection method using a real-life time series, the monthly rate of automobile crashes with alcohol involved.

\section{Generalized Beta AR($p$) Model}
\label{sec: Model definition}
\subsection{Model Specification}
In \cite{liu2024sequential}, the authors define $\bm{Y}_{t} = (X_t,\bm{W}_t)^T$, where $X_t$ is a process following the generalized Beta(1) model, and $\bm{W}_t$, a vector of the exogenous variables, is strictly stationary and geometric ergodic defined on a bounded state space, as the Markovian realization of ${X_t}$. The authors prove the stationarity and ergodicity of $\bm{Y}_t$ when $p=1$. In this section, we will extend the properties to generalized Beta AR($p$) models with $p>1$. 

The generalized Beta AR($p$) model is defined as

\begin{equation}
\label{equation: model definition}
    X_t| \mathcal{C}_t = X_t|\mathbbm{Z}_{t-1}\sim \text{Beta}(\tau \mu_t, \tau(1-\mu_t)),0
    \leq X_t \leq 1,
\end{equation}
where
\begin{equation}
\label{eq: link function of GBETAAR2}
g(\mu_t) =  \bm{\beta}^T \mathbbm{Z}_{t-1} = \varphi_0+ \sum_{i=1}^p \varphi_i \mathcal{A}(X_{t-i})+\sum_{j=0}^q W_{t-j} \bm{\phi}_j,
\end{equation}
in which 
\begin{equation}
\label{eq: notations in the definition of generalized beta}
    \begin{aligned}
    & \mathcal{C}_{t} = \sigma(W_t,X_{t-1},W_{t-1},...,W_1,X_0),\\
    & g(x) =\log(x/(1-x)),\\
    & \mathbbm{Z}_{t-1} = (1, \mathcal{A}(X_{t-1}),...,\mathcal{A}(X_{t-p}),W_{t}, W_{t-1}, ..., W_{t-q})^T.
    \end{aligned}
\end{equation}
The parameter vector $\bm{\beta} = (\varphi_0,\varphi_1,...,\varphi_p,\phi_0, \phi_1,...,\phi_q)^T \in \mathbbm{R}^{p+q+2}$ and the dispersion parameter $\tau>0$. The $x$-link function $\mathcal{A}$ maps the unit interval $[0,1]$ to a bounded interval $[L,U]$. 
Eventually, we want to show that the sequence $\{X_t\}$ is stationary and ergodic. We first propose the regularity conditions for $\{W_t\}$.

\begin{remark}
The exogenous variable $\{W_t\}$ can be a vector. To keep the model simple, we require the exogenous variable $W_t$ to be a scalar. It is straightforward to extend it to higher-dimensional cases.
\end{remark}

\begin{assumption}
\label{assumption: properties of W_t}
The process $\{W_t\}$ is strictly stationary and geometrically ergodic defined on a bounded state space $[a,b]$, where $a$ and $b$ are known. Denote the information set $\mathcal{F}_t = \sigma(X_t, W_t, X_{t-1}, W_{t-1},...,X_0, W_0)$. We assume the there exists a fixed and known $k>0$ so that the conditional density function $h(W_t|\mathcal{F}_{t-1}) = h(W_t|W_{t-1},W_{t-2},...,W_{t-k})$ is known and is continuous with respect to $W_t, W_{t-1},...,W_{t-k}$.
\end{assumption}

Denote $l = \max\{k,p,q\}$. Define the $2l$-dimensional vector
\begin{equation}
    \label{def: markovian realization of X_t}
    \bm{Y}_t = (X_t, X_{t-1},...,X_{t-l+1}, W_t, W_{t-1},...,W_{t-l+1})^T, t\in\mathbbm{Z}.
\end{equation}
(\ref{def: markovian realization of X_t}) indicates that the sequence $\{\bm{Y}_{zl}\}, z \in \mathbbm{Z}$, is a Markov process. We can see that $\{\bm{Y}_{zl}\}$ covers all the observations $\{X_t\}$ and exogenous variables $\{W_t\}$, and each observation or exogenous variable is in and only in one $\bm{Y}_{zl}$. That is, $\{\bm{Y}_{zl}\}$ is a Markovian realization of the sequence $\{X_t\}$.

We denote the $2l$-dimensional state space of the Markov process, $[0,1]^l \times [a,b]^l$, as $\Omega$, and denote the $\sigma$-field as $\mathcal{B}(\Omega).$ Obviously, $\Omega$ is a compact space.
\subsection{Properties of the Generalized Beta AR($p$) Model}
\label{sec: Properties of Generalized Beta AR($p$)}
As demonstrated above, $\{\bm{Y}_{zl}\}$ is a Markov process. We will first prove the strict stationarity and geometrical ergodicity of $\{\bm{Y}_{zl}\}$. The stationarity and ergodicity of $\{X_t\}$ will follow.

\begin{lemma}
\label{lemma:irreducibility&aperiodicity}
The Markovian realization process $\{\bm{Y}_{zl}\}$ is $\psi$-irreducible and aperiodic. 
\end{lemma}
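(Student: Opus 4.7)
The plan is to establish both properties by exhibiting a strictly positive lower bound on the one-step transition density of the Markov chain $\{\bm{Y}_{zl}\}$ on $\Omega$. A single step of this chain (advancing $z$ by one, i.e.\ $l$ time indices in the original scale) consists of two ordered blocks: first generate $W_{zl+1},\ldots,W_{(z+1)l}$ sequentially from the $k$-step conditional density $h$ of Assumption~\ref{assumption: properties of W_t}, then generate $X_{zl+1},\ldots,X_{(z+1)l}$ sequentially from the Beta conditional densities in (\ref{equation: model definition})--(\ref{eq: link function of GBETAAR2}). The joint one-step transition density therefore factors as a product of $l$ factors involving $h$ and $l$ Beta densities, and I aim to lower-bound each factor on a set of positive measure.

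First I would dispose of the Beta block. Because $\mathcal{A}$ maps into the bounded interval $[L,U]$, $W_t\in[a,b]$, and $\bm{\beta}$ is a fixed finite parameter vector, the linear predictor $\bm{\beta}^T\mathbbm{Z}_{t-1}$ is uniformly bounded. Via the logit link $g$, this forces each $\mu_t$ into a compact subinterval of $(0,1)$, so the shape parameters $\tau\mu_t$ and $\tau(1-\mu_t)$ are bounded above and uniformly bounded away from zero. Each Beta density is therefore continuous in $X_{zl+i}$ and admits a strictly positive lower bound on any compact subset of $(0,1)$, uniformly in the admissible conditioning values.

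Next I would handle the $W$-block, which I expect to be the main obstacle. Assumption~\ref{assumption: properties of W_t} supplies a continuous $k$-step conditional density on the compact domain $[a,b]^{k+1}$, and $\{W_t\}$ is assumed geometrically ergodic on $[a,b]$. This implies that the Markovian realization of $\{W_t\}$ is $\psi_W$-irreducible and aperiodic on $[a,b]^l$, and, combined with the continuity of $h$, it should allow one to select $\psi_W$ to be absolutely continuous with respect to Lebesgue measure and to charge an open subset of $[a,b]^l$. The delicate point is that geometric ergodicity by itself does not pinpoint where the $l$-step density is positive; pinning down a Lebesgue-absolutely-continuous reference measure for the $W$-block is exactly what makes the joint argument work, and I expect the proof to lean on compactness of $[a,b]$ together with the continuity of $h$ to secure a uniform positive lower bound on a small rectangle.

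Combining the two blocks yields a one-step transition density for $\{\bm{Y}_{zl}\}$ that is strictly positive on a set of positive measure under $\psi := \mathrm{Leb}_{[0,1]^l}\otimes\psi_W$, establishing $\psi$-irreducibility. For aperiodicity, I would select a small open rectangle $C\subset\mathrm{int}(\Omega)$ on which the transition density is uniformly bounded below by a positive constant; then $P(\bm{y},C)>0$ for every $\bm{y}\in C$, so $C$ is a $\nu_1$-small set with $\nu_1(C)>0$ and the period of the chain must be one. The Beta portion is essentially routine once the boundedness of the linear predictor is noted; the only real subtlety is the careful use of Assumption~\ref{assumption: properties of W_t} to set up a well-behaved reference measure for the $W$-block.
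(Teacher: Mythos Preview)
Your argument is correct and considerably more carefully structured than the paper's own proof, which is essentially three lines. The paper simply asserts, following Woodard et al.\ (2010, Theorem~3), that the transition probability between any two measurable subsets of $\Omega$ is positive, so $\varphi$-irreducibility and aperiodicity are immediate; $\psi$-irreducibility then comes from Meyn--Tweedie (Theorem~4.0.1 and Proposition~4.2.2). There is no explicit factorization into $W$- and $X$-blocks, and the positivity of the $W$-transition is taken for granted rather than derived from Assumption~\ref{assumption: properties of W_t}. Your route---bounding the Beta block via boundedness of the linear predictor, and handling the $W$-block through the assumed geometric ergodicity of $\{W_t\}$---makes transparent exactly which hypotheses carry the argument, and your concern about securing a Lebesgue-absolutely-continuous reference measure for the $W$-block is a genuine subtlety that the paper sidesteps by analogy with the cited reference. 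The paper's approach buys brevity; yours buys clarity about where the argument would fail if, for instance, $h$ vanished on a set of positive Lebesgue measure.
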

\begin{proof}
As shown in Proof of Theorem 3 in \cite{woodard2010stationarity}, aperiodicity and $\varphi$-irreducibility are immediate since the transition probability from one measurable subset of $\Omega$ to another measurable subset of $\Omega$ is positive. Then, by Theorem 4.0.1 and Prop 4.2.2 in \cite{meyn1993markov}, there exists an essentially unique maximal irreducibility measure $\psi$ on $\mathcal{B}(\Omega)$ that the process is $\psi$-irreducible.
\end{proof}

\begin{lemma}
\label{lemma:weak Feller}
$\{\bm{Y}_{zl}\}$ is a weak Feller chain and the state space $\Omega$ is a small set. 
\end{lemma}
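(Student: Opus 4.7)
The plan is to verify both conclusions by working with the explicit $l$-step transition density of the chain $\{\bm{Y}_{zl}\}$, i.e., the joint density of the $l$ new components $(X_{zl+1},\dots,X_{(z+1)l},W_{zl+1},\dots,W_{(z+1)l})$ given $\bm{Y}_{zl}=\bm{y}\in\Omega$. By the Markov structure of $\{X_t\}$ in (\ref{equation: model definition})--(\ref{eq: notations in the definition of generalized beta}) together with Assumption \ref{assumption: properties of W_t}, this joint density factors as
\begin{equation*}
K(\bm{y};x_{1:l},w_{1:l}) \;=\; \prod_{s=1}^{l} f_{\mathrm{Beta}}\!\bigl(x_s;\tau\mu_{zl+s},\tau(1-\mu_{zl+s})\bigr)\;\prod_{s=1}^{l} h\bigl(w_s\mid w_{s-1},\dots,w_{s-k}\bigr),
\end{equation*}
where the arguments of each $\mu_{zl+s}$ and each conditional of $h$ are either coordinates of $\bm{y}$ or among the freshly generated $(x_1,\dots,x_{s-1},w_1,\dots,w_s)$; this is well defined because $l=\max\{k,p,q\}$.

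For the weak Feller property, I would fix a bounded continuous $f:\Omega\to\mathbb{R}$ and show that $\bm{y}\mapsto\int f\,K(\bm{y};\cdot)\,d(x_{1:l},w_{1:l})$ is continuous. The map $\mu_{zl+s}$ is a continuous function of $\bm{y}$ and the intermediate variables because $g^{-1}$ is smooth, $\mathcal{A}$ is assumed continuous, and $\mathbbm{Z}_{t-1}$ depends continuously on its inputs; the Beta density is jointly continuous in $(x_s,\tau\mu_{zl+s})$ on $(0,1)\times(0,\infty)$; and $h$ is continuous by Assumption \ref{assumption: properties of W_t}. Hence $K(\bm{y};x_{1:l},w_{1:l})$ is continuous in $\bm{y}$ for each fixed $(x_{1:l},w_{1:l})$. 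A uniform integrable envelope is furnished by the compactness of $\Omega$: since $\mathbbm{Z}_{t-1}$ lies in a compact set, $\bm{\beta}^T\mathbbm{Z}_{t-1}$ is uniformly bounded, so $\mu_t\in[m,M]\subset(0,1)$ uniformly, which bounds both $f_{\mathrm{Beta}}$ and $h$ uniformly on the support. Dominated convergence then delivers the Feller property.

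For the small-set claim, I would exploit the same uniform lower bounds. On the compact rectangle $C=[m',M']^l\times C_W\subset\Omega$, with $[m',M']$ a compact sub-interval of $(0,1)$ and $C_W$ a compact subset of $[a,b]^l$ on which the continuous density $h$ is bounded away from $0$, the Beta density $f_{\mathrm{Beta}}(x_s;\tau\mu_{zl+s},\tau(1-\mu_{zl+s}))$ is bounded below by a positive constant whenever $x_s\in[m',M']$, uniformly in $\bm{y}\in\Omega$ and in the intermediate $(x_1,\dots,x_{s-1},w_1,\dots,w_s)$ (because $\mu_{zl+s}\in[m,M]$ uniformly). Multiplying these $2l$ uniform lower bounds yields a constant $\delta>0$ with $K(\bm{y};x_{1:l},w_{1:l})\geq\delta$ for all $\bm{y}\in\Omega$ and all $(x_{1:l},w_{1:l})\in C$. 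Taking $\nu$ to be Lebesgue measure on $C$ gives $P(\bm{y},A)\geq\delta\,\nu(A)$ for every $A\in\mathcal{B}(\Omega)$, so $\Omega$ itself is a small (in fact $\nu_1$-small) set, and this minorization can be invoked as the uniform lower bound for the whole chain.

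The main obstacle I anticipate is the bookkeeping in the small-set step: one has to check that the Beta lower bound holds uniformly not only over the starting $\bm{y}$ but also over the freshly generated intermediate values that appear in later $\mu_{zl+s}$'s. The argument goes through because the intermediate values that enter $\mu_{zl+s}$ only need to lie in the bounded domain of $\mathcal{A}$ or in $[a,b]$, which they automatically do by construction; consequently $\bm{\beta}^T\mathbbm{Z}_{t-1}$ remains in a deterministic bounded interval independent of the starting point, and the Beta parameters stay in a fixed compact subset of $(0,\infty)^2$. Everything else is continuity plus compactness.
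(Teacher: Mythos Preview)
Your argument is correct, but your route to the small-set conclusion differs from the paper's. For the weak Feller property both you and the paper rely on continuity of the transition density in the starting point; you are simply more explicit about the dominated-convergence step. For the small-set claim, however, the paper does \emph{not} build a minorization directly. Instead it argues: weak Feller $+$ compact state space $\Rightarrow$ $\Omega$ is petite (Theorem~6.2.8 in \cite{meyn1993markov}), and then $\psi$-irreducibility and aperiodicity from Lemma~\ref{lemma:irreducibility&aperiodicity} upgrade petite to small via Theorem~5.5.7 in \cite{meyn1993markov}. Your approach, by contrast, exhibits an explicit one-step minorization $P(\bm{y},A)\ge\delta\,\nu(A)$ by bounding the product transition density from below on a sub-rectangle. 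The paper's route is shorter and leans on off-the-shelf Markov-chain machinery; yours is self-contained and actually produces a constant $\delta$ and a measure $\nu$, which can be useful if one later wants quantitative ergodicity bounds.

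Two small remarks on your write-up. First, the sentence ``bounds both $f_{\mathrm{Beta}}$ and $h$ uniformly on the support'' is not quite right when $\tau\mu_t<1$ or $\tau(1-\mu_t)<1$, since the Beta density then blows up at an endpoint; what you really need for DCT is an integrable envelope, and $C\,x^{\varepsilon-1}(1-x)^{\varepsilon-1}$ with $\varepsilon=\min_{\bm{y}}\tau\min(\mu_t,1-\mu_t)>0$ does the job uniformly over the compact parameter range. Second, your minorization for the $h$-factors tacitly requires $h(\cdot\mid w_{s-1},\dots,w_{s-k})$ to be bounded away from zero on a \emph{common} target set as the conditioning values range over all of $[a,b]^k$; this is exactly the everywhere-positivity of the transition kernel that the paper already invokes (without stating it as a hypothesis) in the proof of Lemma~\ref{lemma:irreducibility&aperiodicity}, so you are not assuming anything extra.
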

\begin{proof}
Since $X_t$ is a Beta random variable given $\mathcal{C}_t$, the conditional density function of $X_t$ is continuous. The conditional density function of $W_t$, $h(W_t|W_{t-1},...W_{t-k})$, is continuous according to Assumption \ref{assumption: properties of W_t}. Hence, the density function of $\bm{Y}_{zl}$ given $\bm{Y}_{(z-1)l}$ is continuous with respect to $\bm{Y}_{(z-1)l}$. In terms of the definition of weak Feller chain (\cite{meyn1993markov}), $\{\bm{Y}_{zl}\}$ is a weak Feller chain. From Theorem 6.2.8 of \cite{meyn1993markov}, the compact state space, $\Omega$, is a petite set. It, combining with Lemma \ref{lemma:irreducibility&aperiodicity} that the process is $\psi$-irreducible and aperiodic, implies that $\Omega$ is a small set,  see Theorem 5.5.7 of \cite{meyn1993markov}.
\end{proof}

\begin{theorem}
\label{theorem:stationary&ergodic}
The time series $\{\bm{Y}_{zl}\}$ is a strictly stationary and geometrically ergodic process.
\end{theorem}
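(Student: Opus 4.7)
The plan is to lean on the Meyn--Tweedie Foster--Lyapunov framework, where Lemmas \ref{lemma:irreducibility&aperiodicity} and \ref{lemma:weak Feller} have already done almost all the substantive work. After those lemmas, the chain $\{\bm{Y}_{zl}\}$ is $\psi$-irreducible, aperiodic, weak Feller, and the \emph{entire} compact state space $\Omega$ is a small set. So the only thing left is to translate these properties into a statement about an invariant law and a geometric rate of convergence, and then to read off stationarity.

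First I would verify a geometric drift condition of the form $PV(\bm{y}) \le \lambda V(\bm{y}) + b\,\mathbbm{1}_C(\bm{y})$ where $P$ is the one-step transition kernel of the Markov chain $\{\bm{Y}_{zl}\}$. Because $\Omega$ itself is small, I take $C = \Omega$ and the trivial Lyapunov function $V \equiv 1$; then $PV \equiv 1$ and the drift inequality holds for any $\lambda \in (0,1)$ with $b = 1-\lambda$. Combined with $\psi$-irreducibility and aperiodicity, this is precisely the hypothesis of Theorem 15.0.1 (equivalently Theorem 16.0.2) of \cite{meyn1993markov}, which yields that $\{\bm{Y}_{zl}\}$ is in fact uniformly ergodic; uniform ergodicity in particular implies geometric ergodicity and guarantees the existence of a unique invariant probability measure $\pi$ on $(\Omega,\mathcal{B}(\Omega))$ such that $\|P^{n}(\bm{y},\cdot)-\pi(\cdot)\|_{\mathrm{TV}} \le M\rho^{n}$ for some $M<\infty$ and $\rho \in (0,1)$, uniformly in the starting point $\bm{y}\in\Omega$.

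Strict stationarity then follows in the standard way: initializing $\bm{Y}_0$ from $\pi$ makes $\{\bm{Y}_{zl}\}_{z\in\mathbbm{Z}}$ a strictly stationary Markov chain, because $\pi P = \pi$ implies that the finite-dimensional distributions are shift-invariant. To propagate the conclusion to the theorem as stated for any starting convention, one can either (i) work throughout on the canonical two-sided extension of the chain under $\pi$, or (ii) observe that geometric ergodicity lets one couple any initial condition to the stationary chain at an exponential rate, so the stated conclusions hold in the stationary regime that is the object of subsequent analysis.

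I do not anticipate a genuine obstacle here, since the compactness of $\Omega$ makes the drift condition degenerate and trivial; the only care required is the bookkeeping that links the three cited results from \cite{meyn1993markov} (small set + irreducibility + aperiodicity $\Rightarrow$ uniform ergodicity $\Rightarrow$ geometric ergodicity plus a unique invariant law) and then the routine observation that starting from the invariant distribution yields strict stationarity. If one wished to derive the conclusions for $\{X_t\}$ and $\{\bm{Y}_t\}$ themselves rather than only the subsampled chain, this would follow because $\{\bm{Y}_{zl}\}$ covers every $X_t$ and every $W_t$ exactly once, and stationarity/ergodicity transfers to the finer time scale via standard block arguments.
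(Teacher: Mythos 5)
Your proposal matches the paper's own argument: the paper likewise invokes Theorem 15.0.1 of \cite{meyn1993markov} with the trivial drift condition on the small set $C=\Omega$ (using a constant Lyapunov function, just as you do with $V\equiv 1$), concludes geometric ergodicity and existence of the invariant law $\pi$, and then obtains strict stationarity from time-homogeneity of the transition kernel under Assumption \ref{assumption: properties of W_t}. Your detour through uniform ergodicity is a harmless strengthening, not a different route, so the proof is correct and essentially identical to the paper's.
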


The proof is in Appendix 1.

\begin{proposition}
\label{prop: X_t is strong mixing}
$\{X_t\}$ is strong mixing with at least an exponential mixing rate.
\end{proposition}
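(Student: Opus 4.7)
The plan is to leverage Theorem~\ref{theorem:stationary&ergodic} together with the standard implication "geometric ergodicity of a stationary Markov chain $\Rightarrow$ $\beta$-mixing with geometric rate $\Rightarrow$ $\alpha$-mixing with geometric rate," and then pass from the subsampled chain $\{\bm{Y}_{zl}\}$ down to the coordinate process $\{X_t\}$.

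First I would recall the definition of the $\alpha$-mixing (strong mixing) coefficient and invoke a classical theorem (e.g., Davydov, or Bradley's 2005 survey, or Theorem 16.1.5 of \cite{meyn1993markov}) stating that a strictly stationary, geometrically ergodic Markov chain is absolutely regular ($\beta$-mixing) with geometrically decaying coefficients, and hence $\alpha$-mixing with geometrically decaying coefficients. Applied to $\{\bm{Y}_{zl}\}$, which is stationary and geometrically ergodic by Theorem~\ref{theorem:stationary&ergodic}, this gives constants $C>0$ and $\rho\in(0,1)$ such that the strong mixing coefficients satisfy $\alpha_{\bm{Y}}(m)\le C\rho^m$ for all $m\ge 1$.

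Next I would transfer this bound to $\{X_t\}$ using the block structure of the Markovian realization. By the definition in (\ref{def: markovian realization of X_t}), every $X_s$ is a coordinate of the vector $\bm{Y}_{zl}$ for a unique $z\in\mathbb{Z}$ satisfying $zl-l+1\le s\le zl$. Hence
\begin{equation*}
\sigma(X_s:s\le 0)\subseteq\sigma(\bm{Y}_{zl}:z\le 0),\qquad \sigma(X_s:s\ge n)\subseteq\sigma(\bm{Y}_{zl}:z\ge\lceil n/l\rceil).
\end{equation*}
Taking suprema of $|P(A\cap B)-P(A)P(B)|$ over events in the smaller $\sigma$-algebras and comparing with the definition of $\alpha_{\bm{Y}}$ yields
\begin{equation*}
\alpha_X(n)\;\le\;\alpha_{\bm{Y}}\bigl(\lceil n/l\rceil\bigr)\;\le\;C\rho^{\lceil n/l\rceil}\;\le\;C(\rho^{1/l})^n,
\end{equation*}
which is the desired exponential rate with ratio $\rho^{1/l}\in(0,1)$.

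I do not expect a real obstacle here: the two ingredients (geometric ergodicity $\Rightarrow$ geometric $\alpha$-mixing, and mixing transfer to a measurable coordinate process under block subsampling) are both standard. The only mild subtlety to state carefully is that $\{\bm{Y}_t\}$ itself is not a Markov chain for arbitrary $t$ (only the thinned version $\{\bm{Y}_{zl}\}$ is), which is why the gap between the two tail $\sigma$-algebras of $X$ must be converted into a gap measured in units of $l$ before applying the bound on $\alpha_{\bm{Y}}$.
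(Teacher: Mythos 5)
Your proposal is correct and follows essentially the same route as the paper: geometric ergodicity plus strict stationarity of the thinned chain $\{\bm{Y}_{zl}\}$ gives absolute regularity (hence $\alpha$-mixing) at a geometric rate, which is then transferred to $\{X_t\}$ through the Markovian-realization structure. In fact your $\sigma$-algebra inclusion argument makes the transfer step more explicit than the paper's one-line appeal to the realization; just note that for the supremum over all time shifts the block gap can be $\lfloor n/l\rfloor$ rather than $\lceil n/l\rceil$, which still yields the exponential rate.
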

The proof can be found in Appendix 2.

\begin{theorem}
$\{X_t\}$ is ergodic and stationary. 
\end{theorem}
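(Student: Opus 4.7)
The plan is to lift the two results already in hand---Theorem~\ref{theorem:stationary&ergodic} (strict stationarity and geometric ergodicity of the skeleton chain $\{\bm{Y}_{zl}\}$) and Proposition~\ref{prop: X_t is strong mixing} ($\alpha$-mixing of $\{X_t\}$ at an exponential rate)---to the original sequence $\{X_t\}$. Stationarity will come from the Markov structure together with time-homogeneity of the dynamics, and ergodicity will follow from strong mixing.

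For strict stationarity, I would first note that the enlarged process $\{\bm{Y}_t\}_{t\in\mathbb{Z}}$ is itself a time-homogeneous Markov chain: given $\bm{Y}_{t-1}$, the new component $W_t$ can be drawn from the conditional density $h(W_t\mid W_{t-1},\ldots,W_{t-k})$ (well-defined because $k\le l$), after which $X_t$ is drawn from the Beta law in (\ref{equation: model definition}), whose parameters depend only on $(X_{t-1},\ldots,X_{t-p},W_t,\ldots,W_{t-q})$ with $p,q\le l$, all of which are coordinates of $\bm{Y}_{t-1}$ or the freshly drawn $W_t$. The $l$-step transition kernel of this 1-step chain coincides with the 1-step kernel of the skeleton $\{\bm{Y}_{zl}\}$ analyzed in Lemmas~\ref{lemma:irreducibility&aperiodicity}--\ref{lemma:weak Feller} and Theorem~\ref{theorem:stationary&ergodic}. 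Repeating the same verifications---$\psi$-irreducibility, aperiodicity, weak Feller property, compactness of $\Omega$ as a small set, and the drift condition behind Theorem~\ref{theorem:stationary&ergodic}---for the 1-step kernel (the arguments carry over verbatim because they are invariant under a unit time shift) yields that $\{\bm{Y}_t\}$ is itself geometrically ergodic. Any invariant probability measure of the 1-step kernel is automatically invariant for the $l$-step kernel, so by the uniqueness from Theorem~\ref{theorem:stationary&ergodic} it must coincide with $\pi$. Initializing $\bm{Y}_0\sim\pi$ produces a strictly stationary bi-infinite version of $\{\bm{Y}_t\}$, and projecting onto the first coordinate gives strict stationarity of $\{X_t\}$.

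For ergodicity, once stationarity is established, I would simply combine it with Proposition~\ref{prop: X_t is strong mixing}. For a strictly stationary sequence, $\alpha$-mixing with $\alpha(n)\to 0$ forces the shift-invariant $\sigma$-field to be $P$-trivial: for any shift-invariant event $A$, we have $|P(A)-P(A)^2|\le \alpha(n)\to 0$, so $P(A)\in\{0,1\}$. The exponential mixing rate from Proposition~\ref{prop: X_t is strong mixing} is more than enough, and ergodicity of $\{X_t\}$ is immediate.

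The main obstacle is the stationarity step: one has to transfer the unique invariant distribution $\pi$ of the $l$-skeleton back to the 1-step chain $\{\bm{Y}_t\}$. Conceptually this is clear from time-homogeneity, but it still requires revisiting the proofs of Lemmas~\ref{lemma:irreducibility&aperiodicity} and \ref{lemma:weak Feller} to see that $\psi$-irreducibility, the weak Feller property, and the small-set / drift ingredients truly hold for the 1-step kernel rather than only the $l$-skipped one. This is a routine verification given the arguments already carried out, but it is the step where care is needed; everything downstream, including ergodicity via strong mixing, is then immediate.
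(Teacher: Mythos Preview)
Your proposal is correct, and the logical skeleton---use Proposition~\ref{prop: X_t is strong mixing} to get ergodicity from strong mixing---matches the paper. The difference is in how stationarity is obtained and in the order of the two conclusions. The paper's proof is a two-liner: it invokes the ergodic hierarchy to pass from strong mixing to ergodicity, and then asserts that ergodicity implies stationarity. You instead establish stationarity first, by arguing directly that the one-step chain $\{\bm{Y}_t\}$ (not just the $l$-skeleton) satisfies the same irreducibility/aperiodicity/Feller/drift ingredients, hence has the unique invariant law $\pi$, and only afterwards deduce ergodicity from mixing. Your ordering is the conventional one---ergodicity is usually formulated for measure-preserving (i.e.\ stationary) systems, so deriving stationarity \emph{from} ergodicity, as the paper does, is at best an appeal to a specific framework (the ergodic hierarchy of \cite{berkovitz2006ergodic}) rather than the standard probabilistic definition. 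What your route costs is the extra paragraph verifying that Lemmas~\ref{lemma:irreducibility&aperiodicity}--\ref{lemma:weak Feller} and the drift argument go through for the one-step kernel; what it buys is that each implication is the textbook one and nothing hinges on the direction ``ergodic $\Rightarrow$ stationary.'' Note also that the proof of Theorem~\ref{theorem:stationary&ergodic} in the appendix already slides into working with $\{\bm{Y}_t\}$ rather than the skeleton, so in practice the lift you describe is essentially what the paper does implicitly there.
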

\begin{proof}
According to the Ergodic Hierarchy (EH), strong mixing implies ergodicity. A detailed explanation of EH can be found in \cite{berkovitz2006ergodic}. Based on Proposition \ref{prop: X_t is strong mixing} and EH, the strong mixing process $\{X_t\}$ is ergodic, which implies stationarity of $\{X_t\}$.
\end{proof}

\section{Nonparametric Change-point Detection for the Generalized Beta AR($p$) Model and the Asymptotic Results under the Null Hypothesis}
\label{sec: monitoring scheme and null}
In this section, we will introduce the monitoring procedure and the design of the test statistic. We first propose the null hypothesis $H_0$ and the alternative $H_a$. We will then illustrate the fundamental idea of the nonparametric change-point detection method. Several existing methods following the idea will be analyzed and compared. According to the analysis, we propose the test statistic for the nonparametric monitoring. The asymptotic distribution under the $H_0$ will be derived. 

\subsection{Close-end Monitoring Procedure}
From Theorem \ref{theorem:stationary&ergodic}, we have proven that the sequence $\{X_t\}$ following a generalized Beta AR($p$) model is stationary and ergodic. Under the assumption that there is no change point in both $\{X_t\}$ and $\{W_t\}$ (Assumption \ref{assumption: no contamination assumptions}), the observations from $\{X_t\}$ should have the same unconditional cumulative density function (CDF). Given that the process is stationary and ergodic, it is reasonable to assume that the empirical CDF approaches the true CDF as the sample size goes to infinity (a theoretical argument will be discussed in Proposition \ref{prop: General Glivenko–Cantelli}). Therefore, we first collect $m$ initial observations $X_t, t=1,2,...,m$, which are assumed to be not contaminated. We get the empirical CDF of the $m$ observations. The monitoring procedure is then launched.  We can calculate the empirical CDF of the new observations starting from the $(m+1)^{\text{th}}$ output. The test statistic records the difference between the two empirical CDFs. When there is no significant distributional change in the new observations, it is reasonable to assume that the two underlying distributions are the same. 

The monitoring procedure is close-end. We set up the endpoint $(N+1)m$ with a fixed and known $N$ before the procedure starts. The procedure will be terminated if no change point is found in the sequence $\{X_t\}, t=m+1,m+2,...,(N+1)m$. 

First, we introduce the non-contamination assumption.
\begin{assumption}
\label{assumption: no contamination assumptions}
(a) There is no change point in the first $m$ values of the exogenous variable $W_t$. \\
(b) There is no change point in the first $m$ observations of $X_t$. 
\end{assumption}

\begin{remark}
We can notice that in Assumption \ref{assumption: no contamination assumptions}(a), we only require that there is no distributional change in the first $m$ values of the exogenous variable $W_t$. It is allowable that a change point happens in $\{W_t\}, t=m+1,m+2,...,(N+1)m$ that would lead to a distributional change in the output sequence $\{X_t\}$. 
\end{remark}

Denote the CDF of the first $m$ observations $\{X_t\}$ as $F_0$. Denote the CDF of further $X_t$ as $F_t,t = m+1,m+2,...,(N+1)m.$ Under Assumption \ref{assumption: no contamination assumptions}, we test the hypothesis  
\\

$H_0: F_t = F_0 \text{ for all }t \in \{m+1,m+2,...,(N+1)m\},$ 

 v.s.  
 
$H_a: \exists k^*\text{ that } F_t = F_0 \text{ for } m+1\leq t< m+k^* <m+Nm \text{, and }F_t \neq F_0 \text{ for } m+k^*\leq t\leq m+Nm.$
\\

\subsection{The Design of the Test Statistic}

For a sequence $\{X_t\}$ following the generalized Beta AR($p$) model based on (\ref{equation: model definition}) and (\ref{eq: link function of GBETAAR2}), we denote the true unconditional CDF of $X_t$, $F_0: [0,1] \rightarrow [0,1]$. That is,
\begin{equation}
\label{def: true CDF in F}
    F_0(x) = P(X_t \leq x), x \in [0,1].
\end{equation}

We define the empirical CDF based on $X_j,...,X_k, j<k,$ as
\begin{equation}
    \hat{F}_{j:k}(x) = \frac{1}{k-j+1}\sum_{t=j}^k \mathbbm{1}(X_t\leq x), x \in [0,1].
\end{equation}

For detecting distributional changes, we can compare the empirical CDFs of the sequences of interests. 

We first define the test statistic $\mathbbm{B}_m(s,x)$ with $s \in [0, N+1]$, where $m$ is the number of initial observations and $s$ is the number of total observations divided by the number of initial observations. $\mathbbm{B}_m(s,x)$ measures the difference of the empirical CDF of $X_t, t = 1,2,..., \lfloor ms \rfloor,$ and the true CDF of $\{X_t\}$.
\begin{definition}
\label{def: B_m(s,u)}
Define

\begin{equation*}
\begin{aligned}
    \mathbbm{B}_m(s,x) &= \frac{1}{\sqrt{m}}\sum_{t=1}^{\lfloor ms \rfloor} \big (\mathbbm{1}(X_t\leq x)-F_0(x)\big) \\
    & = \frac{1}{\sqrt{m}}\big( \lfloor ms \rfloor (\hat{F}_{1:\lfloor ms \rfloor}(x)- F_0(x))\big) \\
    & \approx \sqrt{m}s \big(\hat{F}_{1:\lfloor ms \rfloor}(x)-F_0(x) \big),x \in [0,1], s\in [0, N+1].
\end{aligned}
\end{equation*}
\end{definition}

As discussed in Section \ref{sec: introduction}, CUSUM-type charts have been widely used for detecting distributional changes. Here, we first define a CUSUM-based test statistic describing the difference of the empirical CDFs of $\{X_t\}$.
Denote

\begin{equation*}
\begin{aligned}
D(m,k,x) &= \frac{k-m}{\sqrt{m}}(\hat{F}_{(m+1):k}(x) - \hat{F}_{1:m}(x)) \\
         &= \frac{k}{\sqrt{m}}(\hat{F}_{1:k}(x)-\hat{F}_{1:m}(x)), k = m+1,m+2,...,(N+1)m, x \in [0,1].
\end{aligned}
\end{equation*}

Rewrite $D(m,k,x)$ by replacing $k$ by $\lfloor ms \rfloor$,
\begin{equation*}
\begin{aligned}
    D(m,k,x)=D_m(s,x) 
    &= \frac{\lfloor ms \rfloor}{\sqrt{m}}(\hat{F}_{1:\lfloor ms \rfloor}(x)-\hat{F}_{1:m}(x))\\
    &\approx \sqrt{m}s (\hat{F}_{1:\lfloor ms \rfloor}(x)-F_0(x))- s(\sqrt{m}(\hat{F}_{1:m}(x)-F_0(x))) \\
    &= \mathbbm{B}_m(s,x)- s \mathbbm{B}_m(1,x), 1<s \leq N+1, x \in [0,1].
\end{aligned}
\end{equation*}
In the literature, one approach is using $\underset{x \in [0,1]}{\sup} \Big|D_m(s,x) \Big|$ as the detector. An example is \cite{kojadinovic2021nonparametric}. To estimate the universal threshold, resampling is required to estimate the (1-$\alpha$) quantile of the double supremum over $s$ and $x$, $\underset{1<s \leq N+1}{\sup}\underset{x \in [0,1]}{\sup} \Big|D_m(s,x) \Big|$. It leads to expensive computation time and detection delay. 

Here, we propose a multivariate test statistic to avoid resampling. 

Denote the $d$-dimensional vector as $\bm{x}:=[x_1,x_2,...,x_d]^T \in [0,1]^d, x_i \neq x_j$ if $i \neq j$. The multivariate Kolmogorov-Smirnov-type test statistic is defined as 
\begin{equation}
\label{eq: definition of the test statistic}
\begin{aligned}
     D_m(s ,\bm{x}) &= \mathbbm{B}_m(s,\bm{x})-s\mathbbm{B}_m(1,\bm{x}) \\
     & \approx \sqrt{m}s\begin{bmatrix}\mathbbm{B}_m(s, x_1)-s\mathbbm{B}_m(1,x_1) \\
    \mathbbm{B}_m(s, x_2)-s\mathbbm{B}_m(1,x_2) \\
    ...\\
    \mathbbm{B}_m(s, x_d)-s\mathbbm{B}_m(1,x_d)\end{bmatrix}, 1<s \leq N+1.
\end{aligned}
\end{equation}
The multivariate test statistic describes the difference between the empirical CDF of $X_t, t = 1,2,...,\lfloor ms \rfloor$ and the empirical CDF of $X_t, t = 1,2,...,m.$ at $\bm{x}$. It is a function only of $s$, the ratio of the total number of observations and the number of initial observations. By developing tests based on $D_m(s,\bm{x})$, we are able to take the supremum over $s$ and compute the threshold.

\subsection{Asymptotic Results under the Null Hypothesis}
\label{sec: Asymptotic results under the null hypothesis}

We introduce Theorem 1 in \cite{bucher2015note} (labeled as Proposition \ref{proposition: asymptotic distribution of CDF}), which is the fundamental result we rely on for deriving the asymptotic behavior of the empirical CDF. 

\begin{proposition}
\label{proposition: asymptotic distribution of CDF}
If $\{X_t\}$ is strong mixing with coefficient $\alpha(n) = O_p(n^{-(1+\eta)})$ for some $\eta >0$, then as $m \rightarrow \infty$, 
\begin{equation}
\label{eq: asymptotic distribution of X_t}
    \mathbbm{B}_m(s,x) \overset{\mathcal{D}}{\rightarrow} \mathbbm{B}_C(s,x).
\end{equation}
$\mathbbm{B}_C(s,x)$ is a tight, centered Gaussian process defined on the space $\mathcal{D}([0,N+1] \times [0,1])$ \footnote{$\mathcal{D}([0,N+1] \times [0,1])$ is the space of functions that are right-continuous and have left-hand limits with $(s,x)\in [0,N+1] \times [0,1].$
} with covariance

\begin{equation}
\label{eq: cov matrix of the Gaussian process}
    \text{Cov}(\mathbbm{B}_C(s,x), \mathbbm{B}_C(s^*, z)) = \min(s,s^*) \Gamma (x,z),
\end{equation}
where $\Gamma (x,z) = \sum_{t \in \mathbbm{Z}} \text{Cov}(\mathbbm{1}(X_0\leq x), \mathbbm{1}(X_t \leq z)).$ 
\end{proposition}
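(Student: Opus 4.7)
The plan is to establish weak convergence on $\mathcal{D}([0,N+1]\times[0,1])$ by the standard two-step route: (i) finite-dimensional convergence of $\mathbbm{B}_m$ to a centered Gaussian process with the claimed covariance, and (ii) tightness of the sequence $\{\mathbbm{B}_m\}$ in the relevant Skorokhod space. Since this is Theorem~1 of \cite{bucher2015note}, I would essentially be reproducing their argument; the key inputs are a central limit theorem for strongly mixing triangular arrays and a maximal inequality for the empirical process under $\alpha$-mixing with polynomial decay rate $\alpha(n)=O(n^{-(1+\eta)})$.

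For finite-dimensional convergence, fix $(s_1,x_1),\dots,(s_r,x_r)$. Each $\mathbbm{B}_m(s_j,x_j)$ is a normalized partial sum of the bounded, centered, and strongly mixing sequence $\{\mathbbm{1}(X_t\le x_j)-F_0(x_j)\}_{t\ge 1}$. The Cram\'er--Wold device reduces the problem to the CLT for a single linear combination, which follows from Ibragimov's CLT for strictly stationary strong-mixing sequences (the mixing rate $\alpha(n)=O(n^{-(1+\eta)})$ is more than enough to ensure absolute summability of autocovariances since the indicator summands are bounded). Computing the limiting variance-covariance by a direct expansion of $\text{Cov}(\mathbbm{B}_m(s,x),\mathbbm{B}_m(s^*,z))$ — the sum runs over $t\le \lfloor ms\rfloor$, $u\le\lfloor ms^*\rfloor$, so the overlap has length $\lfloor m\min(s,s^*)\rfloor$, and the mixing summability lets one pass to $\sum_{t\in\mathbbm{Z}}\text{Cov}(\mathbbm{1}(X_0\le x),\mathbbm{1}(X_t\le z))=\Gamma(x,z)$ — yields precisely $\min(s,s^*)\Gamma(x,z)$.

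Tightness is the main obstacle. One has to control two directions simultaneously: the $s$-direction (a partial-sum-like time index) and the $x$-direction (the empirical-process index). For the $s$-direction at fixed $x$, the increments $\mathbbm{B}_m(s',x)-\mathbbm{B}_m(s,x)$ form a normalized partial sum of bounded mixing summands, and a moment bound of Rio or Doukhan–Massart–Rio type gives $E|\mathbbm{B}_m(s',x)-\mathbbm{B}_m(s,x)|^{2+\delta}\le C|s'-s|^{1+\delta/2}$, which via the Billingsley criterion yields tightness in $s$. For the $x$-direction at fixed $s$, the standard entropy or bracketing argument for the indicator class $\{\mathbbm{1}(\cdot\le x):x\in[0,1]\}$ applies under $\alpha$-mixing (this is where the $(1+\eta)$ rate enters crucially, as in the Doukhan–Massart–Rio empirical-process CLT). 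Joint tightness on the product index then follows from a chaining argument that combines the two moment bounds; this is exactly the technical step carried out in \cite{bucher2015note}. The strong-mixing hypothesis needed here has already been established for $\{X_t\}$ in Proposition~\ref{prop: X_t is strong mixing}, which shows at least geometric mixing, so the polynomial rate assumption of the proposition is automatically satisfied.

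Finally, one identifies the limit: since the finite-dimensional distributions are centered Gaussian with the covariance in \eqref{eq: cov matrix of the Gaussian process}, and since the limit process is tight and thus has a version in $\mathcal{D}([0,N+1]\times[0,1])$, the limit is the tight centered Gaussian process $\mathbbm{B}_C$ described in the statement. The hardest single step is the joint tightness under mixing; everything else is bookkeeping.
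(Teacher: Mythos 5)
The paper gives no proof of this proposition at all: it is imported verbatim as Theorem~1 of \cite{bucher2015note}, and the surrounding text only verifies that the paper's process $\{X_t\}$ satisfies its hypothesis (via Proposition~\ref{prop: X_t is strong mixing}). So there is no in-paper argument to compare against; your submission should be judged as a standalone proof of the quoted result, and on that standard it is an outline rather than a proof. The finite-dimensional part is fine: Cram\'er--Wold plus a CLT for bounded, strictly stationary, strongly mixing summands (for bounded indicators, $\sum_n\alpha(n)<\infty$ suffices, which $\alpha(n)=O(n^{-(1+\eta)})$ gives), and your overlap computation of $\mathrm{Cov}(\mathbbm{B}_m(s,x),\mathbbm{B}_m(s^*,z))\to\min(s,s^*)\Gamma(x,z)$ is correct because $|\mathrm{Cov}(\mathbbm{1}(X_0\le x),\mathbbm{1}(X_t\le z))|\le 4\alpha(|t|)$ is summable.

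The gap is exactly where you say the difficulty lies: joint tightness of the two-parameter process. You dispose of it by asserting that ``the standard entropy or bracketing argument for the indicator class applies under $\alpha$-mixing'' and then deferring the combination step to \cite{bucher2015note} itself --- i.e., the hard step is resolved by citing the theorem you are supposed to be proving. Moreover, the specific technical claim is inaccurate: the Doukhan--Massart--Rio empirical-process CLT is formulated for absolutely regular ($\beta$-mixing) sequences, and generic bracketing/chaining arguments under $\alpha$-mixing require polynomial decay considerably faster than $n^{-(1+\eta)}$; obtaining the sequential empirical FCLT at a rate exponent just above $1$ is precisely the nontrivial contribution of B\"ucher's note (which exploits the one-dimensional, monotone structure of the indicator class rather than off-the-shelf bracketing). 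Since the paper itself treats the proposition as a cited external result, your reliance on the same citation is not unreasonable in context, but as a proof attempt it leaves the decisive tightness argument unestablished, and the route you sketch for it (DMR-type bracketing under $\alpha$-mixing at this rate) would not go through as stated.
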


$\Gamma(x,z)$ records the sum of covariances between $\mathbbm{1}(X_0\leq x)$ and $\mathbbm{1}(X_t\leq z), t \in \mathbbm{Z}$, which is infeasible to calculate in practice, as we are unable to calculate the infinite sum over $t$. However, given that the process is strong mixing, as $t \rightarrow \infty$ or $t \rightarrow -\infty$, $X_0$ and $X_t$ can be considered gradually uncorrelated. So in practice, we can always find a cut point, say $t^*>0$, based on the precision requirement of an experiment, and estimate $\Gamma(x,z)$ by estimating $\sum_{|t|\leq t^*} \text{Cov}(\mathbbm{1}(X_0\leq x), \mathbbm{1}(X_t \leq z))$ using the sample covariance. In Proposition \ref{prop: X_t is strong mixing}, we show $\{X_t\}$ is strong mixing with at least an exponential rate. It is easy to show $\{X_t\}$ satisfies the prerequisite for Proposition \ref{proposition: asymptotic distribution of CDF}. Therefore, (\ref{eq: asymptotic distribution of X_t}) holds for $\{X_t\}$.

\begin{theorem}
\label{theorem: asymptotic dist of D_C under the null}
Let $\{X_t\}$ be a compositional time series following (\ref{equation: model definition}) and (\ref{eq: link function of GBETAAR2}). Under Assumption \ref{assumption: properties of W_t}-\ref{assumption: no contamination assumptions} and under $H_0$, as $m \rightarrow \infty$, the test statistic 
\begin{equation}
\label{eq: asymptotic distribution of the test statistic}
    \underset{1<s \leq N+1}{\sup}D_m(s ,\bm{x}) \overset{\mathcal{D}}{\rightarrow}\underset{1<s \leq N+1}{\sup} (\mathbbm{B}_C(s, \bm{x})-s\mathbbm{B}_C(1,\bm{x})):= \underset{1<s \leq N+1}{\sup}D_C(s,\bm{x}).
\end{equation}
$\mathbbm{B}_C(s,\bm{x})$ is a $d$-variate Gaussian process with covariance matrix $\Gamma(\bm{x})$, of which the $(i,j)$th element is $[\Gamma(\bm{x})]_{i,j} = \Gamma(x_i,x_j)$ with $\Gamma(.,.)$ defined as (\ref{eq: cov matrix of the Gaussian process}). Consequently, the covariance matrix of $D_C(s,\bm{x})$ is $s(s-1)\Gamma(\bm{x}).$
\end{theorem}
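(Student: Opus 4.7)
The plan is to lift Proposition \ref{proposition: asymptotic distribution of CDF}, which is stated for the univariate process indexed by $(s, x)$, to the $d$-variate-in-$\bm{x}$ process, then obtain the convergence of $D_m$ as the image of $\mathbbm{B}_m$ under a continuous linear map, compute the covariance of the limit, and finish by applying the continuous mapping theorem to the supremum functional.

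First, I would verify that the mixing hypothesis of Proposition \ref{proposition: asymptotic distribution of CDF} is met. By Proposition \ref{prop: X_t is strong mixing}, $\{X_t\}$ is strong mixing with at least an exponential rate, so $\alpha(n) = O(n^{-(1+\eta)})$ holds for every $\eta > 0$, and Proposition \ref{proposition: asymptotic distribution of CDF} applies. The latter delivers convergence of $\mathbbm{B}_m(s, x)$ to the tight centered Gaussian process $\mathbbm{B}_C$ in the Skorokhod space $\mathcal{D}([0, N+1] \times [0, 1])$. Restricting the second coordinate to the finite set $\{x_1, \ldots, x_d\}$ yields joint weak convergence of the $d$-variate process
$$\mathbbm{B}_m(s, \bm{x}) := (\mathbbm{B}_m(s, x_1), \ldots, \mathbbm{B}_m(s, x_d))^T \overset{\mathcal{D}}{\rightarrow} \mathbbm{B}_C(s, \bm{x})$$
as processes in $s$ on $\mathcal{D}([0, N+1]; \mathbbm{R}^d)$. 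The covariance structure is inherited entry-wise from (\ref{eq: cov matrix of the Gaussian process}): $\text{Cov}(\mathbbm{B}_C(s, \bm{x}), \mathbbm{B}_C(s^*, \bm{x})) = \min(s, s^*)\,\Gamma(\bm{x})$ with $[\Gamma(\bm{x})]_{i,j} = \Gamma(x_i, x_j)$.

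Next, I would note that the map $\Phi : f \mapsto (s \mapsto f(s) - s\,f(1))$ is continuous on $\mathcal{D}([0, N+1]; \mathbbm{R}^d)$. The continuous mapping theorem then gives $D_m(s, \bm{x}) = \Phi(\mathbbm{B}_m)(s) \overset{\mathcal{D}}{\rightarrow} \Phi(\mathbbm{B}_C)(s) = D_C(s, \bm{x})$ jointly in $s$. A direct covariance calculation yields, for $1 < s \leq s^* \leq N+1$,
\begin{align*}
\text{Cov}(D_C(s, \bm{x}), D_C(s^*, \bm{x})) &= \min(s,s^*)\Gamma(\bm{x}) - s^*\min(s,1)\Gamma(\bm{x}) - s\min(1,s^*)\Gamma(\bm{x}) + s s^* \Gamma(\bm{x}) \\
&= (s - s^* - s + s s^*)\Gamma(\bm{x}) = s^*(s-1)\Gamma(\bm{x}),
\end{align*}
which specializes to $s(s-1)\Gamma(\bm{x})$ on the diagonal $s = s^*$, matching the claim.

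Finally, I would apply the continuous mapping theorem a second time to the (coordinate-wise) supremum functional $f \mapsto \bigl(\sup_{1 < s \leq N+1} f_i(s)\bigr)_{i=1}^d$. Because $D_C$ has continuous sample paths with $D_C(1, \bm{x}) = 0$, this functional is continuous on a set of full measure under the law of $D_C$, so $\sup_{1 < s \leq N+1} D_m(s, \bm{x}) \overset{\mathcal{D}}{\rightarrow} \sup_{1 < s \leq N+1} D_C(s, \bm{x})$. I expect the main obstacle to be the first step: passing from convergence in the $(s, x)$-indexed Skorokhod space of Proposition \ref{proposition: asymptotic distribution of CDF} to joint weak convergence of the $d$-variate-in-$\bm{x}$ process in $s$ alone. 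This transfer relies on the tightness of $\mathbbm{B}_C$ and the almost-sure continuity of its sample paths at each fixed $x_i$, which make the evaluation map at $\bm{x}$ a continuous functional and justify slicing the bivariate-indexed weak limit.
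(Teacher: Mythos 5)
Your proposal is correct and follows essentially the same route as the paper's proof: invoke Proposition \ref{proposition: asymptotic distribution of CDF} via the exponential strong mixing of $\{X_t\}$, apply the continuous mapping theorem to $f \mapsto f(s)-s f(1)$ to get $D_m \overset{\mathcal{D}}{\rightarrow} D_C$, and compute the covariance $s(s-1)\Gamma(\bm{x})$. You are simply more explicit than the paper about the finite-dimensional slicing in $\bm{x}$, the second application of the continuous mapping theorem for the supremum, and the off-diagonal covariance $s^*(s-1)\Gamma(\bm{x})$, all of which are consistent with (and refine) the paper's argument.
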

The proof can be found in Appendix 3.

\begin{theorem}
\label{theorem: asymptotic distribution under the null hypothesis}
Under Assumption \ref{assumption: properties of W_t}-\ref{assumption: no contamination assumptions} and under $H_0$, with the fixed $N>0$, for any given symmetric positive definite matrix $\bm{A} \in \mathbbm{R}^{d \times d}$, 
\begin{equation*}
\begin{aligned}
     &\underset{m \rightarrow \infty}{\lim}P(\underset{1<s \leq N+1}{\sup} \rho^2(s, \gamma)D_m(s ,\bm{x})^T\bm{A}D_m(s ,\bm{x}) \leq c) \\
     =& P(\underset{1<s \leq N+1}{\sup} \rho^2(s, \gamma)D_C(s ,\bm{x})^T\bm{A}D_C(s ,\bm{x}) \leq c),
\end{aligned}
\end{equation*}
where the weight function $\rho(s, \gamma)$ is defined as 
\begin{equation}
\label{eq: definition of the weight function}
    \rho(s,\gamma) = \max \Big\{(s-1)^{-\gamma}s^{\gamma-1}, \delta \Big\}, s \in [1,N+1], \gamma \in [0,\frac{1}{2}), \delta >0,
\end{equation}
and $D_C(s,\bm{x})$ is defined as (\ref{eq: definition of the test statistic}).
\end{theorem}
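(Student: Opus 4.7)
The plan is to apply the continuous mapping theorem to the functional weak convergence that underlies Theorem \ref{theorem: asymptotic dist of D_C under the null}. Proposition \ref{proposition: asymptotic distribution of CDF} gives convergence of $\mathbbm{B}_m(s,x)$ as a process in $(s,x)$ to the tight centered Gaussian process $\mathbbm{B}_C$; restricting to the finite grid $\bm{x}=(x_1,\dots,x_d)$, this yields joint weak convergence of $(\mathbbm{B}_m(\cdot,x_1),\dots,\mathbbm{B}_m(\cdot,x_d))$ in the product Skorokhod space $\mathcal{D}([0,N+1])^d$. Because $D_m(s,\bm{x}) = \mathbbm{B}_m(s,\bm{x}) - s\,\mathbbm{B}_m(1,\bm{x})$ is a continuous linear transformation of the sample path and $\mathbbm{B}_C$ has almost surely continuous paths, $D_m(\cdot,\bm{x}) \Rightarrow D_C(\cdot,\bm{x})$ with a continuous limit.

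The next step is to show that the functional $\Psi(f) := \sup_{1 < s \leq N+1} \rho^2(s,\gamma)\, f(s)^T \bm{A}\, f(s)$ is almost surely continuous at $D_C(\cdot,\bm{x})$. I plan to split the supremum at $s = 1+\eta$ for small $\eta > 0$. On $[1+\eta,N+1]$ the weight $\rho(s,\gamma)$ is bounded and continuous, so the restricted supremum $\Psi^{+}_\eta$ is continuous in the uniform topology and the standard CMT yields $\Psi^{+}_\eta(D_m(\cdot,\bm{x})) \Rightarrow \Psi^{+}_\eta(D_C(\cdot,\bm{x}))$. For the remainder $\Psi^{-}_\eta$ on $(1,1+\eta]$ the weight blows up like $(s-1)^{-\gamma}$, but the identity $D_m(s,\bm{x}) = (\mathbbm{B}_m(s,\bm{x}) - \mathbbm{B}_m(1,\bm{x})) - (s-1)\mathbbm{B}_m(1,\bm{x})$, combined with an increment bound for the partial-sum process (available under the exponential strong-mixing rate from Proposition \ref{prop: X_t is strong mixing}) and the tightness of $\mathbbm{B}_m(1,\bm{x})$, gives $\|D_m(s,\bm{x})\|^2 = O_p(s-1)$ uniformly on $(1,1+\eta]$. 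Hence $\rho^2(s,\gamma)\|D_m(s,\bm{x})\|^2 = O_p((s-1)^{1-2\gamma})$; since $\gamma < 1/2$ this contribution vanishes in probability as $\eta \downarrow 0$, uniformly in $m$, and the analogous estimate for $D_C$ follows from its covariance $s(s-1)\Gamma(\bm{x})$.

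A standard sandwich between $\Psi^{+}_\eta$ and $\Psi^{+}_\eta + \Psi^{-}_\eta$, letting $\eta \downarrow 0$, then delivers $\Psi(D_m) \Rightarrow \Psi(D_C)$ as real-valued random variables, which gives pointwise convergence of CDFs at continuity points $c$, the stated claim. The main obstacle is precisely the singular behaviour of $\rho$ at $s = 1$: because $\Psi$ is not continuous in the sup-norm or $J_1$ topology, the CMT cannot be applied directly to $D_m \Rightarrow D_C$. The crux of the argument is therefore the uniform-in-$m$ increment estimate for $\mathbbm{B}_m(\cdot,\bm{x})$ near $s = 1$, which is where the exponential mixing rate from Proposition \ref{prop: X_t is strong mixing} pays its dividend by placing the process comfortably inside the regime where classical tightness bounds for empirical partial-sum processes apply. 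Once this increment bound is in hand, the remaining passages to the limit are routine.
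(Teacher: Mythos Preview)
Your approach is substantially more careful than the paper's. The paper's own proof is a single sentence: it declares the result ``straightforward following Theorem~\ref{theorem: asymptotic dist of D_C under the null}'', i.e.\ the authors treat the statement as an immediate consequence of the process-level convergence $D_m(\cdot,\bm{x})\Rightarrow D_C(\cdot,\bm{x})$ via the continuous mapping theorem, with no further discussion of the weight.

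You have correctly put your finger on the point the paper glosses over: for $\gamma>0$ the weight $\rho(s,\gamma)=\max\{(s-1)^{-\gamma}s^{\gamma-1},\delta\}$ is unbounded as $s\downarrow 1$, so the map $f\mapsto \sup_{1<s\le N+1}\rho^2(s,\gamma)\,f(s)^T\bm{A}\,f(s)$ is not continuous on $\mathcal{D}((1,N+1])^d$ in the uniform (or $J_1$) topology, and a bare invocation of the CMT is not justified. Your truncation at $s=1+\eta$ together with a uniform-in-$m$ increment bound for $\mathbbm{B}_m$ near $s=1$---which the exponential mixing rate of Proposition~\ref{prop: X_t is strong mixing} indeed supports via a H\'ajek--R\'enyi-type maximal inequality for strongly mixing partial sums---is exactly the standard device used in the weighted-CUSUM literature to close this gap, and your exponent bookkeeping $\rho^2(s,\gamma)\|D_m(s,\bm{x})\|^2=O_p\big((s-1)^{1-2\gamma}\big)$ with $1-2\gamma>0$ is the right one. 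For $\gamma=0$ the weight is bounded on $(1,N+1]$ and the paper's one-liner is already rigorous; for $\gamma\in(0,\tfrac12)$ your argument supplies what the paper omits. In short, your route is not so much a different proof as a rigorous completion of the paper's.
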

\begin{proof}
The proof is straightforward following Theorem \ref{theorem: asymptotic dist of D_C under the null}.
\end{proof}
Weight functions are used to adjust the monitoring sensitivity and alleviate the detection delay. $\rho(s, \gamma)$ is a weight function widely used in sequential change-point detection, e.g. \cite{dette2018likelihood} and \cite{wied2013monitoring}. $\bm{A}$ can be customized to accommodate the research interests or to adjust the magnitude of $D_m(s, \bm{x})$. Without calculating the double supremum of the Gaussian process, the test statistic $\rho^2(s, \gamma)D_m(s, \bm{x})^T\bm{A}D_m(s, \bm{x})$ allows us to only calculate the supremum with respect to $s$ and get the uniform threshold $c(\gamma, \alpha)$ so that
\begin{equation*}
\underset{m \rightarrow \infty}{\lim}P(\underset{1<s \leq N+1}{\sup} \rho^2(s, \gamma)D_m(s ,\bm{x})^T\bm{A}D_m(s ,\bm{x}) \leq c(\gamma, \alpha)|H_0) = 1-\alpha
\end{equation*}
holds.

The example verifying the asymptotic distribution of the test statistic can be found in Appendix 5. The example shows how the weight function can adjust the monitoring sensitivity by using different $\gamma$. 

\section{Power of the Test Statistic}
\label{sec: power of the test statistic}
In Section \ref{sec: monitoring scheme and null}, we propose the monitoring procedure and derive the asymptotic distribution of the test statistic. We are able to get the uniform threshold $c(\gamma, \alpha)$ when the tuning parameter in the weight function $\gamma$ and the significance level $\alpha$ are determined. The monitoring procedure terminates once the test statistic $\rho^2(s, \gamma)D_m(s ,\bm{x})^T\bm{A}D_m(s ,\bm{x})$ exceeds $c(\gamma, \alpha)$ when $1 < s \leq N+1$. In this section, we will prove the power of the test converges to one. That is, once a change point occurs, the probability of rejecting the null hypothesis converges to one when $m$ is sufficiently large. 
First, we propose the regularity conditions for the process $\{X_t\}$ and the sampled value vector $\bm{x}$.
\begin{assumption}
\label{assumption: power analysis}
(a) The change point $k^* = \lfloor m s^* \rfloor$, where $1<s^* \leq N+1$, and $s^* = O(1)$.\\
(b) There exists a $s_0 \in (s^*, N+1)$ that $s_0-s^* = O(1)$, and there exists a compact neighborhood of $s_0$, $U_{s_0}$, such that $\frac{\lfloor ms_0 \rfloor}{m} \in U_{s_0}$ and $\underset{s \in U_{s_0}}{\inf}\rho(s, \gamma)>0$, for any $\gamma \in [0, \frac{1}{2})$. \\
(c) The process after the change point is still stationary and ergodic. \\
(d) Denote the true CDF after the change point as $F'(x), x\in [0,1]$. There exists at least one measurable interval $[x_a, x_b] \subset [0,1]$ such that for $\forall x \in [x_a,x_b] \subset [0,1], F'(x) \neq F_0(x)$. There exists at least one $x_{i_0} \in \bm{x}$ such that $x_{i_0} \in [x_a,x_b].$
\end{assumption}

(a) guarantees that the change point does not happen right after the monitoring procedure starts. (b) first guarantees that after the change point happens at $k^*$, we can get enough observations $X_t, t = k^*+1,..., \lfloor ms_0 \rfloor$, which will be used to get the empirical CDF after $k^*$. Also, the test statistic $\rho^2(s, \gamma)D_m(s ,\bm{x})^T\bm{A}D_m(s ,\bm{x})$ would not be zero because of the weight function $\rho^2(s, \gamma)$. (c) makes sure that the empirical CDF after the change point can be used to approximate the theoretical CDF after the change point (details will be introduced in Proposition \ref{prop: General Glivenko–Cantelli}). In (d), we assume that there at least exists a measurable subset of $[0,1]$ in which the CDFs before and after the change point are different. It makes sure that as long as the choice of $\bm{x}$ is appropriate, we can have at least one value in $\bm{x}$ at which the CDF discrepancy can be shown. 

We first introduce the General Glivenko–Cantelli Theorem. 
\begin{proposition}
\label{prop: General Glivenko–Cantelli}
For a stationary sequence $\{X_t\}$ with CDF $F_0(x)$ and empirical CDF $\hat{F}_{1:m}(x)$, 
\begin{equation*}
    \underset{x \in [0,1]} {\sup} |\hat{F}_{1:m}(x)-F_0(x)| \overset{a.s.}{\rightarrow}0 \text{ as }m \rightarrow \infty.
\end{equation*}
\end{proposition}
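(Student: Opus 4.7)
The plan is to adapt the classical Glivenko--Cantelli argument to the stationary ergodic setting, replacing the iid strong law of large numbers with Birkhoff's ergodic theorem. This substitution is legitimate because, although the proposition speaks of a ``stationary sequence,'' the process $\{X_t\}$ to which it is applied has been shown in Section~2 to be not only strictly stationary but also ergodic. The argument then has two pieces: pointwise almost-sure convergence at each fixed $x$, followed by a monotonicity-plus-finite-partition step that upgrades pointwise convergence to uniform convergence over $[0,1]$.

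For the pointwise step, I fix $x \in [0,1]$ and apply Birkhoff's ergodic theorem to the bounded measurable functional $y \mapsto \mathbbm{1}(y \leq x)$ along $\{X_t\}$, obtaining $\hat{F}_{1:m}(x) \overset{a.s.}{\rightarrow} E[\mathbbm{1}(X_1 \leq x)] = F_0(x)$. The same argument applied to $y \mapsto \mathbbm{1}(y < x)$ yields $\hat{F}^{-}_{1:m}(x) := \frac{1}{m}\sum_{t=1}^m \mathbbm{1}(X_t < x) \overset{a.s.}{\rightarrow} F_0(x^{-})$. Since only a countable collection of grid points will be needed below, a single null set can be chosen on which both convergences hold simultaneously at all of them.

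For the uniform upgrade, fix $\epsilon > 0$ and choose a finite grid $0 = x_0 < x_1 < \cdots < x_K = 1$ with $F_0(x_i^{-}) - F_0(x_{i-1}) < \epsilon$ for every $i$; the standard construction $x_i := \inf\{x : F_0(x) \geq i\epsilon\}$ works because $F_0$ is non-decreasing and bounded between $0$ and $1$, so at most $\lceil 1/\epsilon\rceil$ points are needed. For any $x \in [x_{i-1}, x_i)$, monotonicity of $\hat{F}_{1:m}$, $\hat{F}^{-}_{1:m}$ and $F_0$ sandwiches
\begin{equation*}
\hat{F}_{1:m}(x_{i-1}) - F_0(x_i^{-}) \;\leq\; \hat{F}_{1:m}(x) - F_0(x) \;\leq\; \hat{F}^{-}_{1:m}(x_i) - F_0(x_{i-1}),
\end{equation*}
and adding and subtracting $F_0(x_{i-1})$ on one end and $F_0(x_i^{-})$ on the other, together with $F_0(x_i^{-}) - F_0(x_{i-1}) < \epsilon$, yields the uniform bound
\begin{equation*}
\sup_{x \in [0,1]} |\hat{F}_{1:m}(x) - F_0(x)| \;\leq\; \max_{0 \leq i \leq K}|\hat{F}_{1:m}(x_i) - F_0(x_i)| + \max_{0 \leq i \leq K}|\hat{F}^{-}_{1:m}(x_i) - F_0(x_i^{-})| + \epsilon.
\end{equation*}

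Because the grid is finite, the pointwise step forces both maxima on the right-hand side to zero almost surely, so $\limsup_{m \to \infty} \sup_{x \in [0,1]} |\hat{F}_{1:m}(x) - F_0(x)| \leq \epsilon$ almost surely; intersecting null sets along a sequence $\epsilon_n \downarrow 0$ completes the proof. The main, and essentially only, subtle point is the bookkeeping at possible atoms of $F_0$: one must carry both $\hat{F}_{1:m}$ and its left-continuous counterpart $\hat{F}^{-}_{1:m}$ through the sandwich, since otherwise an atom of $F_0$ strictly between two consecutive grid points could violate the $\epsilon$-slice inequality. Once this is set up, the ergodic-theorem input is plug-and-play and no additional probabilistic machinery (in particular, the stronger strong-mixing property established earlier) is needed for this statement.
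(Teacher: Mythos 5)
Your proposal is correct, but it takes a different route from the paper: the paper does not prove Proposition~\ref{prop: General Glivenko–Cantelli} at all, it simply invokes the General Glivenko--Cantelli Theorem and points to the reference \cite{athreya2016general}, whereas you supply a complete, self-contained argument (Birkhoff's ergodic theorem for the pointwise step at the indicators $\mathbbm{1}(y\leq x)$ and $\mathbbm{1}(y<x)$, then the classical finite-grid/monotonicity sandwich to upgrade to uniformity). Your sandwich inequality, the $\epsilon$-grid construction, and the countable-grid null-set bookkeeping are all sound. What your write-up buys, beyond self-containedness, is an explicit acknowledgment of a point the paper glosses over: as literally stated, the proposition for a merely stationary sequence is false in general, since without ergodicity the empirical CDF converges a.s.\ to the conditional distribution function given the invariant $\sigma$-field, which need not equal $F_0$ (e.g.\ a mixture of two distinct i.i.d.\ laws). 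You correctly repair this by noting that the process to which the proposition is applied is stationary \emph{and} ergodic (Theorem~\ref{theorem:stationary&ergodic} together with Proposition~\ref{prop: X_t is strong mixing}, and Assumption~\ref{assumption: power analysis}(c) for the post-change regime), so Birkhoff's theorem yields the deterministic limit $F_0(x)$; the cited general Glivenko--Cantelli result covers exactly this ergodic case. What the paper's citation-based treatment buys is brevity and, in principle, greater generality than is needed here. One further remark: in this application $X_t$ has a conditional Beta density, so $F_0$ is continuous and your careful left-limit bookkeeping at atoms is not needed, though it is harmless and makes your proof valid for general $F_0$; also, stronger properties such as strong mixing are indeed not required for this statement, as you observe.
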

The General Glivenko–Cantelli Theorem demonstrates that for stationary sequences, when the number of observations is large enough, the empirical CDF uniformly converges to the true CDF almost surely. A detailed introduction can be found in \cite{athreya2016general}.
\begin{theorem}
\label{theorem: power 1}
Let a sequence $\{X_t\}$ follow the model (\ref{equation: model definition}) and (\ref{eq: link function of GBETAAR2}). Under Assumption \ref{assumption: properties of W_t}-\ref{assumption: power analysis}, the test statistic proposed in Theorem \ref{theorem: asymptotic distribution under the null hypothesis} has the following equation 
\begin{equation*}
    \underset{m\rightarrow \infty}{\lim}P(\underset{1<s \leq N+1}{\sup} \rho^2(s, \gamma)D_m(s ,\bm{x})^T\bm{A}D_m(s ,\bm{x}) \geq c |H_a) = 1 
\end{equation*}
holds for any fixed $c > 0$. Therefore, the power of the test converges to 1 as $m \rightarrow \infty$.
\end{theorem}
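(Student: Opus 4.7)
The plan is to show the stronger statement that for the specific value $s=s_0$ guaranteed by Assumption \ref{assumption: power analysis}(b), the quantity $\rho^2(s_0,\gamma)\,D_m(s_0,\bm{x})^T \bm{A}\,D_m(s_0,\bm{x})$ already diverges to $+\infty$ in probability as $m\to\infty$. Since the supremum over $s\in(1,N+1]$ only makes the detector larger, this forces the probability of exceeding any fixed threshold $c>0$ to tend to one. The core engine is Proposition \ref{prop: General Glivenko–Cantelli}: applied to the (stationary, ergodic) pre-change segment $X_1,\dots,X_{k^*}$, it gives $\sup_x|\hat F_{1:m}(x)-F_0(x)|\overset{a.s.}{\to}0$ and $\sup_x|\hat F_{1:k^*}(x)-F_0(x)|\overset{a.s.}{\to}0$; applied to the (stationary, ergodic by Assumption \ref{assumption: power analysis}(c)) post-change segment $X_{k^*+1},X_{k^*+2},\dots$, it gives $\sup_x|\hat F_{(k^*+1):\lfloor ms_0\rfloor}(x)-F'(x)|\overset{a.s.}{\to}0$.

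First I would decompose the empirical CDF at time $\lfloor ms_0\rfloor$ as a convex combination of pre- and post-change empirical CDFs,
\begin{equation*}
\hat F_{1:\lfloor ms_0\rfloor}(x)=\frac{k^*}{\lfloor ms_0\rfloor}\hat F_{1:k^*}(x)+\frac{\lfloor ms_0\rfloor-k^*}{\lfloor ms_0\rfloor}\hat F_{(k^*+1):\lfloor ms_0\rfloor}(x),
\end{equation*}
and use $k^*/m\to s^*$, $\lfloor ms_0\rfloor/m\to s_0$ from Assumption \ref{assumption: power analysis}(a)--(b) to obtain the almost sure limit $\hat F_{1:\lfloor ms_0\rfloor}(x)\to(s^*/s_0)F_0(x)+((s_0-s^*)/s_0)F'(x)$, uniformly in $x$. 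Combined with $\hat F_{1:m}(x)\to F_0(x)$, this yields
\begin{equation*}
\hat F_{1:\lfloor ms_0\rfloor}(x)-\hat F_{1:m}(x)=\frac{s_0-s^*}{s_0}\bigl(F'(x)-F_0(x)\bigr)+o_{a.s.}(1),
\end{equation*}
uniformly in $x\in[0,1]$. Plugging this into the definition of $D_m$ and using $\lfloor ms_0\rfloor/\sqrt{m}\sim\sqrt{m}\,s_0$ gives, coordinatewise,
\begin{equation*}
D_m(s_0,x_i)=\sqrt{m}\,(s_0-s^*)\bigl(F'(x_i)-F_0(x_i)\bigr)+o_{a.s.}(\sqrt{m}),\qquad i=1,\dots,d.
\end{equation*}

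By Assumption \ref{assumption: power analysis}(d) there is a coordinate $i_0$ with $\Delta:=F'(x_{i_0})-F_0(x_{i_0})\neq 0$, so $|D_m(s_0,x_{i_0})|\to\infty$ a.s.\ at rate $\sqrt{m}$. Since $\bm{A}$ is symmetric positive definite with smallest eigenvalue $\lambda_{\min}(\bm{A})>0$, the Rayleigh quotient bound gives $D_m(s_0,\bm{x})^T\bm{A}\,D_m(s_0,\bm{x})\ge\lambda_{\min}(\bm{A})\,\|D_m(s_0,\bm{x})\|^2\ge\lambda_{\min}(\bm{A})\,D_m(s_0,x_{i_0})^2$, which tends to $+\infty$ almost surely at rate $m$. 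Multiplying by $\rho^2(s_0,\gamma)$, which by Assumption \ref{assumption: power analysis}(b) is bounded below by a strictly positive constant $\bigl(\inf_{s\in U_{s_0}}\rho(s,\gamma)\bigr)^2>0$, the test statistic at $s=s_0$ still diverges. Hence for any fixed $c>0$,
\begin{equation*}
P\Bigl(\sup_{1<s\le N+1}\rho^2(s,\gamma)D_m(s,\bm{x})^T\bm{A}\,D_m(s,\bm{x})\ge c\,\big|\,H_a\Bigr)\;\ge\;P\bigl(\rho^2(s_0,\gamma)D_m(s_0,\bm{x})^T\bm{A}\,D_m(s_0,\bm{x})\ge c\bigr)\;\longrightarrow\;1,
\end{equation*}
which is the claim.

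The main technical obstacle is the clean application of Proposition \ref{prop: General Glivenko–Cantelli} to the post-change segment $\{X_t\}_{t>k^*}$: strictly speaking this subseries is initialized by pre-change observations, so one must argue that the post-change dynamics define a stationary ergodic version whose empirical CDF controls $\hat F_{(k^*+1):\lfloor ms_0\rfloor}$ up to a vanishing initialization effect. Assumption \ref{assumption: power analysis}(c) is precisely what is needed here, and because $\lfloor ms_0\rfloor-k^*\to\infty$ (of order $m$ by Assumption \ref{assumption: power analysis}(b)), the transient contribution from the finitely many initializing values washes out, leaving the a.s.\ uniform convergence to $F'$ that the decomposition above requires. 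Everything else---the convex combination identity, the Rayleigh bound, and the positivity of $\rho$ on $U_{s_0}$---is bookkeeping.
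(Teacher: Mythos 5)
Your proposal is correct and follows essentially the same route as the paper's proof: evaluate the detector at $s_0$, split $\hat F_{1:\lfloor ms_0\rfloor}$ into pre- and post-change empirical CDFs, invoke Proposition \ref{prop: General Glivenko–Cantelli} on each segment (the post-change one via Assumption \ref{assumption: power analysis}(c)) to get $D_m(s_0,\bm{x})=\sqrt{m}\bigl((s_0-s^*)(F'(\bm{x})-F_0(\bm{x}))+o_p(1)\bigr)$, and then use the positive definiteness of $\bm{A}$, Assumption \ref{assumption: power analysis}(d), and the positive lower bound on $\rho(s_0,\gamma)$ to force divergence at rate $m$, which dominates any fixed threshold $c$. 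Your explicit Rayleigh-quotient bound and the remark about washing out the initialization of the post-change segment are slightly more careful than the paper's write-up, but they do not change the argument.
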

The proof is in Appendix 4.

The performance of the nonparametric method for detecting distributional changes can be found in Appendix 5.

\section{Application to Monthly Alcohol Involvement Automobile Crash Rate}
\label{sec: Application to monthly alcohol involvement automobile crash rate}
In this section, we apply the nonparametric sequential change-point detection method to a compositional data set with the percentage of automobile crashes that are alcohol-related.

People have done research investigating the effects of alcohol beverage prices on alcohol-impaired driving. There have been studies showing that the increases in alcohol prices reduce alcohol-impaired driving, e.g. \cite{chaloupka1993alcohol} and \cite{ruhm1996alcohol}. We can see the phenomenon from the time series shown in Figure \ref{fig: CPI_AB}. Figure \ref{fig: CPI_AB} (left) shows the percentage change of the consumer price index for alcoholic beverages in the U.S. (referred to as CPI-AB onward) from its value 12 months prior \footnote{Data source: FRED \url{https://fred.stlouisfed.org/series/CUSR0000SAF116}.}. CPI-AB is recorded monthly. It measures the overall price level of alcoholic beverages across the U.S.. We can see that CPI-AB keeps increasing compared to its value 12 months prior, as the percentage is always greater than 0. Meanwhile, the monthly alcohol involvement crash rate \footnote{Data source: National Highway Traffic Safety Administration \url{https://www.nhtsa.gov/file-downloads?p=nhtsa/downloads/FARS/}.}, according to Figure \ref{fig: CPI_AB} (right), has an overall decreasing trend. We include the CPI-AB sequence as an exogenous variable when fitting models for the monthly alcohol involvement automobile crash rate data set. According to Figure \ref{fig: CPI_AB}, the behavior of CPI-AB before and after August 2019 is different \footnote{A piece of news reporting the change can be found at \url{https://www.businessinsider.com/beer-prices-drop-away-from-home-2019-6}.}. The increase rate of CPI-AB first drops after August 2019, then the rate increases from April 2020. That is, the exogenous variable contains change points.
 \begin{figure}[!h]
 \label{fig: CPI_AB}
 \centering
     \includegraphics[width = 0.45\textwidth]{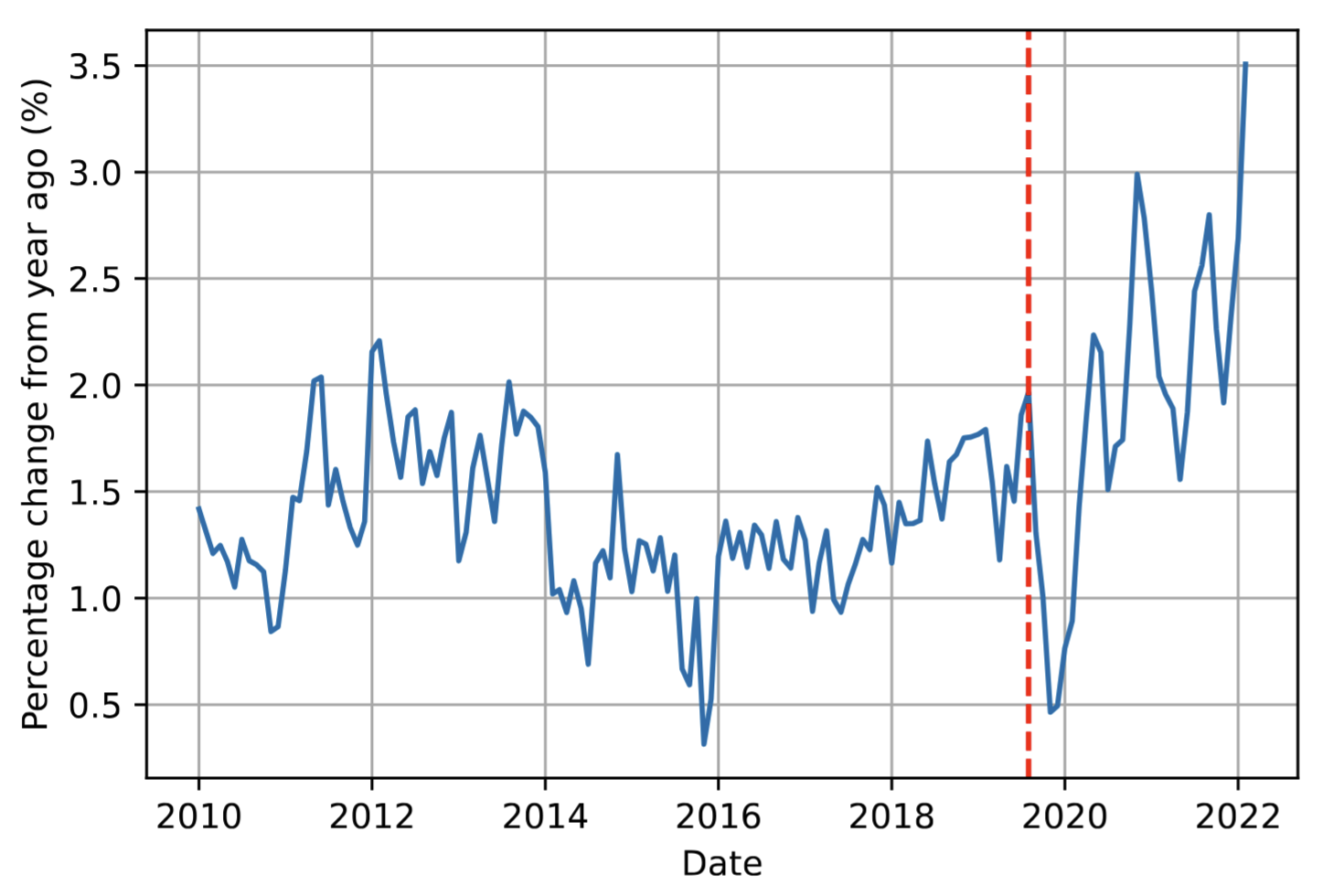}
     \includegraphics[width = 0.45\textwidth]{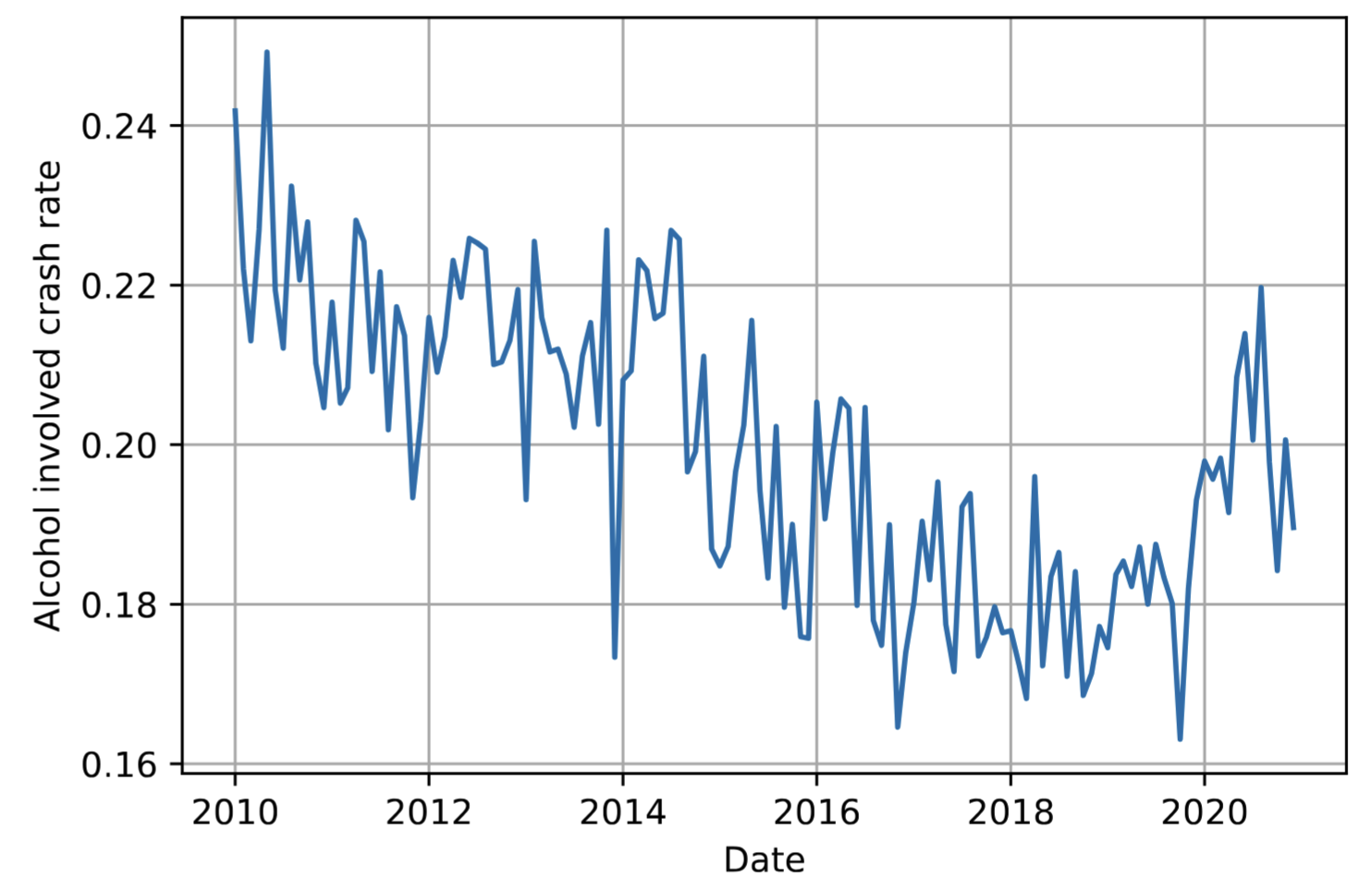}     
     \caption{(Left) percentage change of monthly CPI-AB from January 2010 to February 2022. The red dotted line indicates the location of the CPI-AB in August 2019, where the structure of the time series changes. (Right) monthly alcohol involvement crash rate from January 2010 to December 2020.}
 \end{figure}

We first conduct seasonality and time-trend adjustments to the monthly alcohol involvement automobile crash rate. A regression model with independent variables 'year' and 'month' is fitted on the monthly alcohol involvement automobile crash rate from January 2010 to December 2018. The detrended monthly alcohol involvement automobile crash rate from January 2010 to December 2020 is the original sequence minus the seasonality and time-trend factors from the fitted regression model. The following experiment is applied to the detrended monthly alcohol involvement automobile crash rate. The detrended time series is shown in Figure \ref{fig: adjusted alcohol involvement crash rate}.

We first fit generalized Beta AR($p$) models with $p = 1,2,3,4$ to the detrended monthly alcohol involvement automobile crash rate from January 2010 to December 2018. For each $p$, we calculate the AICs and mean absolute errors (MAEs), $\frac{1}{m}\sum_{t=1}^m |X_t-\hat{\mu}_t|$, of the fitting results of the model without the exogenous variable ($q=0$) and models with order $q$ exogenous variables, $q = 1,2,3,...,8.$ Comparing the model fitting metrics of all the models, we are able to select the most appropriate model for the output sequence. By comparing the metrics of the models with $q=0$, we can verify whether, in this case, the CPI-AB sequence can improve the prediction accuracy of the crash rate sequence. Table \ref{table: real-life data set model comparison} lists the AICs and MAEs of the models with $p=1,2,3,4$ and $q = 0,1,2,4,8$. For every $p$, compared with the model with $q=0$, models with exogenous variables have lower AICs and MAEs. Therefore, we can verify that the CPI-AB sequence can indeed increase the model fitting performance. For every $p$, the fitting result reaches the best when $q=8$ and the AIC starts to increase when $q>8$. Analyzing all the fitting metrics in Table \ref{table: real-life data set model comparison}, we consider that the generalized Beta AR model with $p=3$ and $q=8$ is an appropriate model for the alcohol involvement crash rate as it has the lowest AIC and a relatively small MAE. 

\begin{table}[!h]
\centering
\caption{The AICs and MAEs of the model fitting results.}
\begin{tabular}{r|r|r|r|r|r}
\hline
AIC                    & \multirow{2}{*}{$q=0$} & \multirow{2}{*}{$q=1$} & \multirow{2}{*}{$q=2$} & \multirow{2}{*}{$q=4$} & \multirow{2}{*}{$q=8$} \\ \cline{1-1}
MAE                    &                        &                        &                        &                        &                        \\ \hline
\multirow{2}{*}{$p=1$} & -601.8705              & -602.0930              & -600.2870              & -602.7968              & -607.3994              \\ \cline{2-6} 
                       & 0.01132                & 0.01115                & 0.01116                & 0.01110                & 0.01083                \\ \hline
\multirow{2}{*}{$p=2$} & -619.0036              & -618.8485              & -616.9964              & -620.2313              & -622.6874              \\ \cline{2-6} 
                       & 0.01091                & 0.01070                & 0.01072                & 0.01037                & 0.01017                \\ \hline
\multirow{2}{*}{$p=3$} & -634.8317              & -635.2870              & -633.8645              & -634.2300              & \textcolor{red}{-636.1937}           \\ \cline{2-6} 
                       & 0.00979                & 0.00961                & 0.00958                & 0.00933                & 0.00939                \\ \hline
\multirow{2}{*}{$p=4$} & -601.5165              & -634.4343              & -632.8803              & -633.0470              & -635.1787              \\ \cline{2-6} 
                       & 0.01116                & 0.00962                & 0.00963                & 0.00934                & 0.00932                \\ \hline
\end{tabular}
\label{table: real-life data set model comparison}
\end{table}

Figure \ref{fig: crash_rate_prediction} shows the model fitting results of the model with $p = 3$ and $q = 8$. To show the fitting results clearly, in Figure \ref{fig: crash_rate_prediction}, we show the original sequence and the adjusted conditional mean, which is the conditional mean from the fitted model plus the seasonality and time-trend factors from the fitted regression model.

\begin{figure}
     \centering
     \includegraphics[width = 0.5\textwidth]{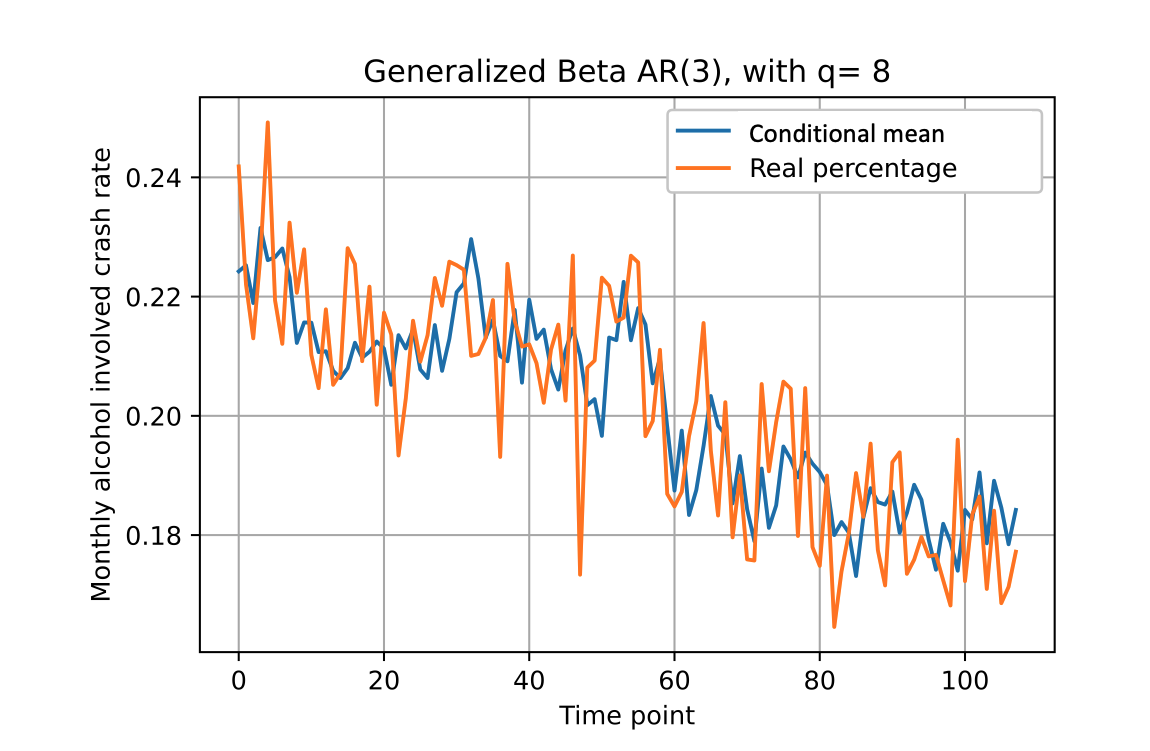}
     \caption{Fitting result for the monthly alcohol involvement automobile crash rate from January 2010 to December 2018 from the model with $p=3$ and $q=8$.}
     \label{fig: crash_rate_prediction}
\end{figure}

We consider the data from January 2010 to December 2018 as the initial observations. So $m = 108$. The monitoring procedure starts from January 2019, and the endpoint is December 2020. That is, the total number of new observations is 24, and $N = 24/108 = 0.22.$ We sample 10 percentages with equal distance, and find the corresponding quantile vector $\bm{x}$ of the training sequence (the monthly alcohol involvement automobile crash rate from January 2010 to December 2018). We choose $\bm{A}$ to be the 10-dimensional diagonal matrix with diagonal elements 1/10. With $N$, $\bm{x}$ and $\bm{A}$ defined above and $\gamma = 0, 0.25, 0.4$, we simulate the Gaussian process-related statistic $\underset{1<s \leq N+1}{\sup} \rho^2(s, \gamma)D_C(s ,\bm{x})^T\bm{A}D_C(s ,\bm{x})$ 10,000 times. For each $\gamma$, we find the threshold with significance level $\alpha = 0.05$, i.e. $c(\gamma, 0.05)$ as shown in Table \ref{table: c(gamma, 0.05)}.

\begin{table}[!h]
\centering
\caption{$c(\gamma, 0.05)$ under different $\gamma$}
\begin{tabular}{l|rrr}
\hline
$\gamma$     & 0          & 0.25       & 0.4        \\ \hline
$c(\gamma, 0.05)$ & 0.1546 & 0.4052 & 0.7812 \\ \hline
\end{tabular}
\label{table: c(gamma, 0.05)}
\end{table}

The detection process will depend on the simulated thresholds. The nonparametric change-point detection is applied to the converted time series. The detection results are listed in Table \ref{table: real-life nonpara results}. We can see that the earliest detected change point is March 2020. The detrended process and the location of March 2020 are shown in Figure \ref{fig: adjusted alcohol involvement crash rate}. 

\begin{table}[!h]
\centering
\caption{Detected change points for the alcohol involvement crash rate.}
\begin{tabular}{l|lll}
\hline
$\gamma$     & 0          & 0.25       & 0.4        \\ \hline
Detected change point & 04/2020 & 03/2020 & 03/2020 \\ \hline
\end{tabular}
\label{table: real-life nonpara results}
\end{table}

\begin{figure}[!h]
    \centering
    \includegraphics[width = 0.5\textwidth]{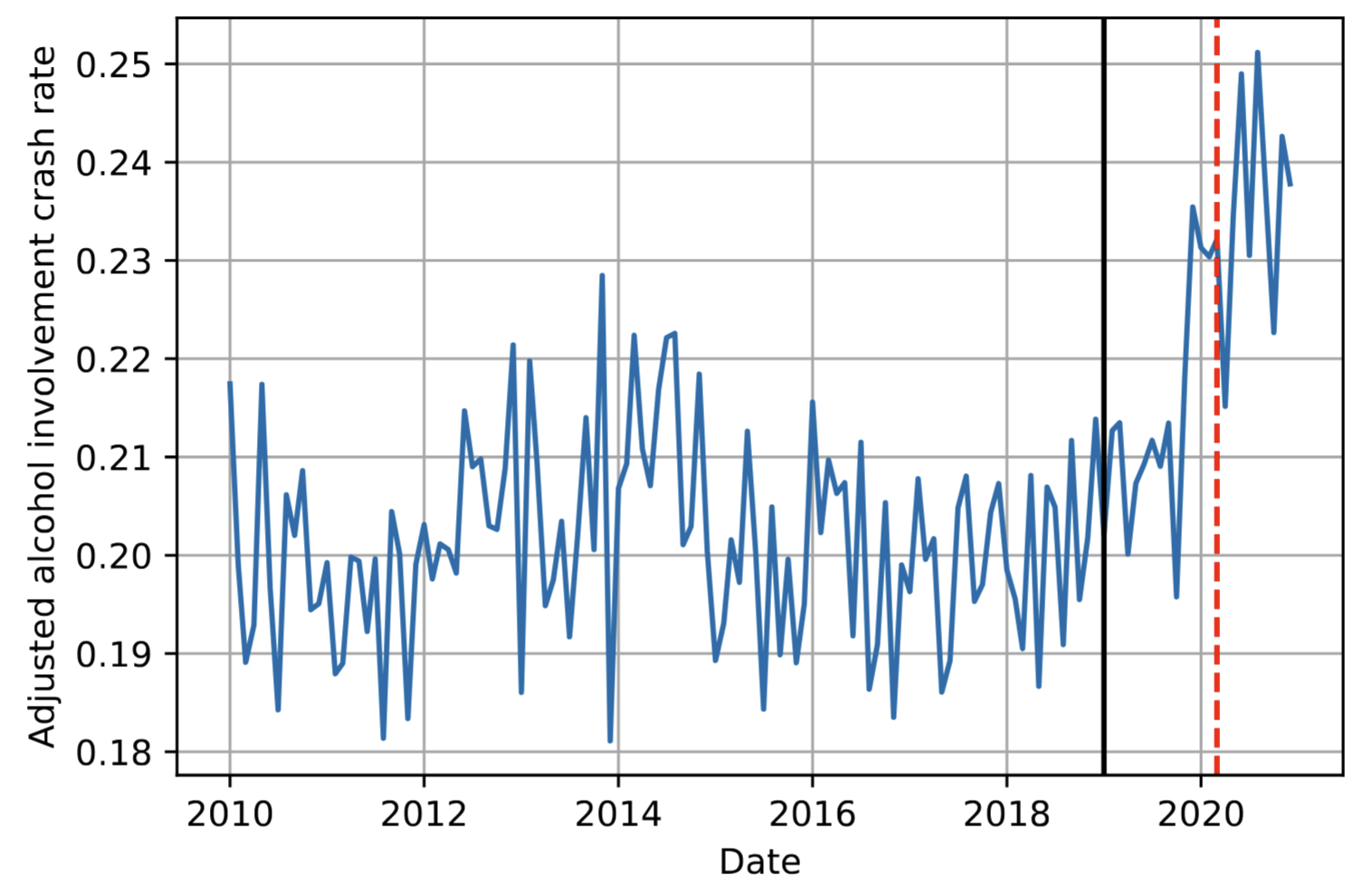}
    \caption{Detrended alcohol involvement crash rate from 2010 to 2020. The black solid line shows the location when the detection is launched (January 2019). The red dotted line shows the location of the detected change point (March 2020).}
    \label{fig: adjusted alcohol involvement crash rate}
\end{figure}

By conducting the nonparametric sequential change-point detection, we can be aware of a potential change happening in March 2020. Therefore, the model can't be used anymore. Meanwhile, researchers can be alerted and investigate the reasons causing the change point.

\newpage

\section*{Appendix}
\subsection*{Appendix 1. Proof of Theorem \ref{theorem:stationary&ergodic}}
\label{sec: proof of Theorem stationary&ergodic}
According to Theorem 15.0.1 of \cite{meyn1993markov}, for a $\psi$-irreducible and aperiodic Markov chain, if there exists a petite set $C$, constant $b <\infty$, $\beta>0$ and a function $V \geq 1$ finite at some $y_0 \in \Omega$ such that the drift condition holds 
\begin{equation*}
    \Delta V(x) \leq -\beta V(x) +b \mathbbm{1}_C(x), x \in \Omega,
\end{equation*}
then the process is geometrically ergodic, where $\Delta V(x)$ is defined as
\begin{equation*}
    \Delta V(x) =  \int P(x,\text{d}y)V(y)-V(x),
\end{equation*}
and $P(x, \text{d}y)$ is the one-step transition pdf from $x$ to $y$.

From Lemma \ref{lemma:weak Feller} we have proven that the state space $\Omega$ is a small (also petite) set. Define $V(\cdot)$ as the constant function $\Omega \rightarrow b$, and let $\beta = 1$, then the drift condition can be easily satisfied.
\begin{equation*}
    \begin{aligned}
    \Delta V(y) = & \int P(y,\text{d}y')V(y')-V(y) \\
             \leq & \int P(y,\text{d}y')(\underset{y' \in \Omega} {\max}V(y'))-V(y) \\
             =& (\underset{y' \in \Omega} {\max}V(y'))-V(y) \\
              = & b-V(y) \\
              = & -V(y)+b\mathbbm{1}_{\Omega}(y), \text{ for any }y \in \Omega.
    \end{aligned}
\end{equation*}
So $\{\bm{Y}_t\}$ is geometrically ergodic. Also, according to Theorem 15.0.1 in \cite{meyn1993markov}, $\psi$-irreducibility, aperiodicity and the drift condition together imply the existence of the unconditional probability of the process, say $\pi$, such that
\begin{equation*}
    \pi(A) = P(\bm{Y}_t \in A), \text{ for any } t\in \mathbbm{Z} \text{ and }A \in \mathcal{B}(\Omega).
\end{equation*}
According to Assumption \ref{assumption: properties of W_t}, $W_t$ is strictly stationary, which means that the transition probabilities of $\{W_t\}$ are free from $t$. The modeling procedures ((\ref{equation: model definition}) and (\ref{eq: link function of GBETAAR2})), together with the strict stationarity of $W_t$, imply that the transition probabilities of $\{\bm{Y}_t\}$ are free from $t$. 
Therefore, we can conclude that $\{\bm{Y}_t\}$ is strictly stationary.

\subsection*{Appendix 2. Proof of Proposition \ref{prop: X_t is strong mixing}}
\label{sec: proof of X_t is strong mixing}
\begin{proof}
According to Theorem 3.7 of \cite{bradley2005basic}, a strictly stationary and geometrically ergodic process satisfies absolute regularity with absolute regularity coefficient $\beta(n)\rightarrow 0$ at least exponentially fast as $n \rightarrow \infty$. Shown by (1.11) in \cite{bradley2005basic}, the strong mixing coefficient is bounded by the absolute regularity coefficient. Absolute regularity implies strong mixing. So $\{\bm{Y}_{zl}\}$, shown as a strictly stationary and geometrically ergodic process in Theorem \ref{theorem:stationary&ergodic}, is strong mixing with at least an exponential mixing rate. Define the strong mixing coefficient as
\begin{equation*}
    \alpha(n) = \underset{j \in \mathbbm{N}^+}{\sup}\;\alpha(\mathcal{Y}_0^j,\mathcal{Y}_{j+n}^\infty) = \underset{j \in \mathbbm{N}^+}{\sup} \; |P(A \cap B)-P(A)P(B)|, \text{ for } \forall A \in \mathcal{Y}_0^j, B \in \mathcal{Y}_{j+n}^\infty,
\end{equation*}
where $\mathcal{Y}_J^L := \sigma(\bm{Y}_{zl}, J \leq z \leq L).$
Given the $\{Y_t\}$ is a Markovian realization of $\{X_t\}$, $\{X_t\}$ is strong mixing with at least an exponential mixing rate.
\end{proof}

\subsection*{Appendix 3. Proof of Theorem \ref{theorem: asymptotic dist of D_C under the null}}
\label{sec: proof of asymptotic dist of D_C under the null}
\begin{proof}
We have illustrated that $\{X_t\}$ satisfies the precondition in Proposition \ref{proposition: asymptotic distribution of CDF}. Therefore, $\mathbbm{B}_m(s,x) \overset{\mathcal{D}}{\rightarrow} \mathbbm{B}_C(s,x)$ for any $0<s<N+1$ and $x \in \Omega.$ By Continuous Mapping Theorem, $\mathbbm{B}_m(s,x)-s\mathbbm{B}_m(1,x) \overset{\mathcal{D}}{\rightarrow} \mathbbm{B}_C(s,x)-s\mathbbm{B}_C(1,x).$ (\ref{eq: asymptotic distribution of the test statistic}) can then be derived as the vector version of the asymptotic behavior. As linear transformations of the Gaussian process $\mathbbm{B}_C$, the covariance of $D_C(s,x_i)$ and $D_C(s,x_j)$
\begin{equation*}
    \begin{aligned}
         \text{Cov}(D_C(s,x_i), D_C(s,x_j)) &= \text{Cov}(\mathbbm{B}_C(s, x_i)-s\mathbbm{B}_C(1,x_i),\mathbbm{B}_C(s, x_j)-s\mathbbm{B}_C(1,x_j)) \\
         &= \text{Cov}(\mathbbm{B}_C(s, x_i), \mathbbm{B}_C(s, x_j))-s\text{Cov}(\mathbbm{B}_C(1, x_i), \mathbbm{B}_C(s, x_j))\\
         & \quad -s \text{Cov}(\mathbbm{B}_C(s, x_i), \mathbbm{B}_C(1, x_j)))+s^2 \text{Cov}(\mathbbm{B}_C(1, x_i), \mathbbm{B}_C(1, x_j))) \\
         & = s \Gamma(x_i,x_j) -s \min\{s,1\}\Gamma(x_i,x_j)\\
         & \quad -s \min\{s,1\}\Gamma(x_i,x_j)+s^2 \Gamma(x_i,x_j) \\
         &= s(s-1) \Gamma(x_i,x_j), \text{ for } 1<s\leq N+1.
    \end{aligned}
\end{equation*}
So the covariance matrix of $D_C(s,\bm{x})$ is $s(s-1)\Gamma(\bm{x}).$
\end{proof}

\subsection*{Appendix 4. Proof of Theorem \ref{theorem: power 1}}
\label{sec: proof of power 1}
\begin{proof}

For every $x_i \in \bm{x}$, as defined in (\ref{eq: definition of the test statistic}),
\begin{equation}
\label{eq: decompose D_m(s,u)}
    \begin{aligned}
         D_m(s_0, x_i) &= \sqrt{m}s_0(\hat{F}_{1:\lfloor ms_0 \rfloor}(x_i)-\hat{F}_{1:m}(x_i)) \\ 
         &= \frac{1}{\sqrt{m}}\sum_{t=1}^{\lfloor ms_0 \rfloor} \mathbbm{1}(X_t \leq x_i)-\sqrt{m}s_0 \hat{F}_{1:m}(x_i) \\
         &= \frac{1}{\sqrt{m}}(\sum_{t=1}^{\lfloor ms^* \rfloor} \mathbbm{1}(X_t \leq x_i)+\sum_{t=\lfloor ms^* \rfloor+1}^{\lfloor ms_0 \rfloor} \mathbbm{1}(X_t \leq x_i))-\sqrt{m}(s^*+(s_0-s^*)) \hat{F}_{1:m}(x_i) \\
         &= \sqrt{m}s^* \hat{F}_{1:\lfloor ms^* \rfloor}(x_i)+\sqrt{m}(s_0-s^*) \hat{F}_{\lfloor ms^* \rfloor+1:\lfloor ms_0 \rfloor}(x_i)-\sqrt{m}(s^*+(s_0-s^*)) \hat{F}_{1:m}(x_i) \\ 
         &=\sqrt{m}s^*(\hat{F}_{1:\lfloor ms^* \rfloor}(x_i)-\hat{F}_{1:m}(x_i))+\sqrt{m}(s_0-s^*)(\hat{F}_{\lfloor ms^* \rfloor+1:\lfloor ms_0 \rfloor}(x_i)-\hat{F}_{1:m}(x_i)) \\
         &= \sqrt{m}\Big(s^*(\hat{F}_{1:\lfloor ms^* \rfloor}(x_i)-\hat{F}_{1:m}(x_i))\\
         & \quad \quad \quad +(s_0-s^*)(\hat{F}_{\lfloor ms^* \rfloor+1:\lfloor ms_0 \rfloor}(x_i)-{F}'(x_i))\\
         & \quad \quad \quad +(s_0-s^*)({F}'(x_i)-\hat{F}_{1:m}(x_i)) \Big),
    \end{aligned}
\end{equation}
where $F'(x_i)$ is the true CDF of the process  at $x_i$ after the change point. $\{X_t\}$ is proven to be stationary in Theorem \ref{theorem:stationary&ergodic}.
So according to Proposition \ref{prop: General Glivenko–Cantelli}, as $m\rightarrow \infty$, for any $x_i \in \Omega$, 
\begin{equation*}
    |\hat{F}_{1:\lfloor ms^* \rfloor}(x_i)-F_0(x_i)| = o_p(1),
\end{equation*}
and 
\begin{equation*}
    |\hat{F}_{1: m}(x_i)-F_0(x_i)| = o_p(1).
\end{equation*}
Therefore, 
\begin{equation*}
\begin{aligned}
     &s^*|\hat{F}_{1:\lfloor ms^* \rfloor}(x_i)-\hat{F}_{1:m}(x_i)| \\ \leq & s^*(|\hat{F}_{1:\lfloor ms^* \rfloor}(x_i)-F_0(x_i)|+|\hat{F}_{1: m}(x_i)-F_0(x_i)|) \\
     =& O(1) \times (o_p(1)+o_p(1)) \\
     =& o_p(1).
\end{aligned}
\end{equation*} 
Given that the sequence after the change point is still stationary (Assumption \ref{assumption: power analysis}(c)), according to Proposition \ref{prop: General Glivenko–Cantelli}, we can also get $(s_0-s^*)|\hat{F}_{\lfloor ms^* \rfloor+1:\lfloor ms_0 \rfloor}(x_i)-F'(x_i)| = o_p(1)$ as $m \rightarrow \infty$. Meanwhile, 
\begin{equation*}
    \begin{aligned}
         &(s_0-s^*)(F'(x_i)-\hat{F}_{1:m}(x_i)) \\
         =&(s_0-s^*)(F'(x_i)-F_0(x_i))+(s_0-s^*)(F_0(x_i)-\hat{F}_{1:m}(x_i)) \\
         =& (s_0-s^*)(F'(x_i)-F_0(x_i))+o_p(1).
    \end{aligned}
\end{equation*}
From the decomposition of $D_m(s,x_i)$ in (\ref{eq: decompose D_m(s,u)}), we can see that
\begin{equation*}
\begin{aligned}
         & \sqrt{m}\Big(s^*(\hat{F}_{1:\lfloor ms^* \rfloor}(x_i)-\hat{F}_{1:m}(x_i))+(s_0-s^*)(\hat{F}_{\lfloor ms^* \rfloor+1:\lfloor ms_0 \rfloor}(x_i)-F'(x_i))+(s_0-s^*)(F'(x_i)-\hat{F}_{1:m}(x_i)) \Big)\\
    =& \sqrt{m}\Big(o_p(1)+o_p(1) + (s_0-s^*)(F'(x_i)-F_0(x_i))+o_p(1)\Big) \\
    =&\sqrt{m}\Big((s_0-s^*)(F'(x_i)-F_0(x_i))+o_p(1)\Big),
\end{aligned}
\end{equation*}
The multivariate test statistic can then be rewritten as  
\begin{equation*}
    D_m(s_0,\bm{x}) = \sqrt{m}\Big((s_0-s^*)(F'(\bm{x})-F_0(\bm{x}))+o_p(1)\Big).
\end{equation*}

Given Assumption \ref{assumption: power analysis}(d), $F'(\bm{x})-F_0(\bm{x}) \neq \bm{0}.$ Combining it with the positive definiteness of $\bm{A}$, 
\begin{equation*}
    (F'(\bm{x})-F_0(\bm{x}))^T \bm{A} (F'(\bm{x})-F_0(\bm{x}))>0.
\end{equation*}
Then, as $m \rightarrow \infty$, the test statistic 
\begin{equation*}
\begin{aligned}
&\underset{1<s \leq N+1}{\sup} \rho^2(s, \gamma)D_m(s ,\bm{x})^T\bm{A}D_m(s ,\bm{x}) \\
 \geq& \;\rho^2(s_0, \gamma)D_m(s_0,\bm{x})^T\bm{A}D_m(s_0,\bm{x}) \\
 = & \rho^2(s_0, \gamma)m (s_0-s^*)^2 (F'(\bm{x})-F_0(\bm{x})+o_p(1))^T \bm{A} (F'(\bm{x})-F_0(\bm{x})+o_p(1)) \\
 \rightarrow & \infty.
\end{aligned}
\end{equation*}
So the equation holds that
\begin{equation*}
    \underset{m\rightarrow \infty}{\lim}P(\underset{1<s \leq N+1}{\sup} \rho^2(s, \gamma)D_m(s ,\bm{x})^T\bm{A}D_m(s ,\bm{x}) \geq c |H_a) = 1.
\end{equation*}
\end{proof}

\subsection*{Appendix 5. Simulation Study}
\subsubsection*{Asymptotic Results under $H_0$}
\label{sec: Asymptotic results under $H_0$}
To verify whether the asymptotic behavior of the test statistic under the $H_0$ is as proposed in Theorem \ref{theorem: asymptotic distribution under the null hypothesis}, we run simulations to compare the empirical rejection probabilities and the nominal significance levels. 

In this experiment, the exogenous variable sequence follows 
\begin{equation}
\label{eq: simulation model under the null1}
   W_t  = -0.1 W_{t-1}+\epsilon_{t}, \quad \epsilon_t \overset{i.i.d.}{\sim}N(0,1),
\end{equation}
and there is a Beta AR(3) sequence with $\tau = 100$ and the conditional mean $\mu_t$ determined by 
\begin{equation}
\label{eq: simulation model under the null2}
    \log(\mu_t/(1-\mu_t)) = \varphi_0+ \sum_{i=1}^3 \varphi_i \log(X^*_{t-i}/(1-X^*_{t-i}))+ 0.5 W_t^*, 
\end{equation}
with $[\varphi_0, \varphi_1, \varphi_2, \varphi_3] = [0.5, 0.1,0.2,0.2]$. We pick 20 percentages with equal distance and find the corresponding quantiles of the model (\ref{eq: simulation model under the null1}) and (\ref{eq: simulation model under the null2}). Since the distribution of $X_t$ is unknown and is not even, compared to choosing $\bm{x}$, starting from choosing percentages and then finding the corresponding $\bm{x}$ enables us to adjust the precision of $\bm{x}$. In this simulation, the sampled percentage vector $\bm{u} = [0.048, 0.095,...,0.952]^T \in \mathbbm{R}^{20}$. Then we first simulate a sequence of 10,000 data points following the model (\ref{eq: simulation model under the null1}) and (\ref{eq: simulation model under the null2}), and find the sampled value vector $\bm{x}$ corresponding to $\bm{u}$ from the sequence. For example, $x_1$ in $\bm{x}$ is the $(10000 \times 0.048)^{\text{th}}$ order statistic of the sequence.
Based on $\bm{x}$, we can estimate $\Gamma(\bm{x})$ using the 10,000 data points following the method we discuss in Proposition \ref{proposition: asymptotic distribution of CDF}. We set up the cut point $t^* = 50.$

We choose $N = 2$ for this experiment. With the three tuning parameters $\gamma = 0, 0.25, 0.4$, we generate Gaussian processes  and build $D_C(s, \bm{x}), 1 < s \leq N+1$ as defined in (\ref{eq: asymptotic distribution of the test statistic}), and calculate the thresholds $c(\gamma,\alpha)$ so that 
\begin{equation*}
    P(\underset{1<s \leq N+1}{\sup} \rho^2(s, \gamma)D_C(s ,\bm{x})^T\bm{A}D_C(s ,\bm{x}) \leq c(\gamma,\alpha)) = 1-\alpha,
\end{equation*}
for the four significance levels $\alpha = 0.1, 0.05, 0.025, 0.01$. We use the weight function $\rho(s,\gamma)$ in (\ref{eq: definition of the weight function}) and choose $\delta = 0.0001\footnote{A random small enough constant.}$. We choose $\bm{A}$ as the 20-dimensional diagonal matrix with elements equal to 1/20 in order to control the scale of $D_C(s, \bm{x})^T\bm{A}D_C(s, \bm{x})$. 

In Proposition \ref{proposition: asymptotic distribution of CDF}, we discuss the method we use to generate Gaussian processes in practice. We choose $m = 1000$ and generate $(N+1)*m = 3000$ independent normal random variables following $N(\bm{0}, \Gamma(\bm{x}))$. The cumulative sum of the first $k, k = 1,2,...,(N+1)m,$ random variables divided by $\sqrt{m}$ is a realization of $\mathbbm{B}_C(k/m, \bm{x})$, the first element in $D_C(s,\bm{x}) (s=k/m)$. $\mathbbm{B}_C(1, \bm{x})$, the stochastic process part of the second element in $D_C(s,\bm{x})$ is realized when $k = m$. Since we require $s \in (1,N+1]$, for every iteration, we repeat the procedure to build $\rho^2(s, \gamma)D_C(s ,\bm{x})^T\bm{A}D_C(s ,\bm{x})$, and record the supremum over $s$ for $s = \frac{m+1}{m}, \frac{m+1}{m},...,\frac{(N+1)m}{m}.$ We repeat the process 10,000 times, and calculate 
the $(1-\alpha)$ quantile of the supremum of the 10,000 samples as the threshold $c(\gamma, \alpha)$.
In Table \ref{table:null c }, we list the threshold $c(\gamma,\alpha)$ for different $(\gamma,\alpha)$.

\begin{table}[!htb]
\centering
\caption{$c(\gamma,\alpha)$ under different $\gamma$ and significance level $\alpha$}
\begin{tabular}{lllll}
\hline
$\alpha$        & 0.1    & 0.05   & 0.025  & 0.01    \\ \hline
$\gamma = 0$    & 0.7251 & 0.9507 & 1.1587 & 1.4548  \\
$\gamma = 0.25$ & 1.0178 & 1.2828 & 1.5665 & 1.9026  \\
$\gamma = 0.4$  & 1.3774 & 1.7059 & 2.0232 & 2.4226 \\ \hline
\end{tabular}
\label{table:null c }
\end{table}
We set up three experiments with the numbers of initial observations $m = 50, 100, 150$ and the same $N = 2$. For each $m$, we simulate 5,000 samples and calculate the empirical rejection probability, which is the percentage of cases that 
\begin{equation*}
    \underset{1<s \leq N+1}{\sup} \rho^2(s, \gamma)D_m(s ,\bm{x})^T\bm{A}D_m(s ,\bm{x}) \geq c(\gamma, \alpha). 
\end{equation*}
The results are shown in Table \ref{table:null alpha compare}. 
\begin{table}[!htb]
\centering
\caption{The empirical probabilities with different $m$ and $\gamma$}
\begin{tabular}{llllll}
\hline
\multicolumn{2}{c}{$\alpha$}                                & 0.1    & 0.05                       & 0.025  & 0.01   \\ \hline
\multicolumn{1}{c}{\multirow{3}{*}{$\gamma = 0$}} & $m=50$ & 0.1086 & 0.0518& 0.0262& 0.0120 \\
\multicolumn{1}{c}{}                              & $m=100$ & 0.1210 & 0.0648& 0.0348 & 0.0168 \\
\multicolumn{1}{c}{}                              & $m=150$ & 0.1160 & 0.0554 & 0.0310 & 0.0152 \\ \hline
\multirow{3}{*}{$\gamma = 0.25$}                  & $m=50$ & 0.0960 & 0.0508& 0.0234 & 0.0116 \\
                                                  & $m=100$ & 0.1174 & 0.0644& 0.0328& 0.0160 \\
                                                  & $m=150$ & 0.1068& 0.0596 & 0.0296& 0.0150 \\ \hline
\multirow{3}{*}{$\gamma = 0.4$}                   & $m=50$ & 0.0768 & 0.0366 & 0.0192 & 0.0090 \\
                                                  & $m=100$ & 0.1008 & 0.0528& 0.0280& 0.0124 \\
                                                  & $m=150$ & 0.0954& 0.0538& 0.0278& 0.0134\\ \hline
\end{tabular}
\label{table:null alpha compare}
\end{table}
We can see that the empirical probabilities and the corresponding nominal significance levels are reasonably close, which verifies that the asymptotic distribution of the test statistics under $H_0$ is as proposed in Theorem \ref{theorem: asymptotic distribution under the null hypothesis}.

\subsubsection*{Nonparametric Sequential Detection Results for Distributional Changes}
We simulate an exogenous variable sequence including a distributional change. The first 250 exogenous variables follow
\begin{equation*}
   W_t = -0.1 W_{t-1}+\epsilon_{t}, \quad \epsilon_t \overset{i.i.d.}{\sim}N(0,1),
\end{equation*}

and then the following 250 exogenous variables switch to the model 
\begin{equation*}
        W_t = -\frac{1}{5}W_{t-1}-\frac{3}{10}W_{t-2}+\frac{1}{10}\epsilon_{t}+\frac{1}{20} \epsilon_{t-1}, \quad \epsilon_t \overset{i.i.d.}{\sim}N(0,1).
\end{equation*}
The conditional mean of the total 500 outputs is determined by 
\begin{equation}
\label{eq: link function when parametric fails}
 \log(\mu_t/(1-\mu_t)) =  \varphi_0+ \sum_{i=1}^3 \varphi_i \log(X^*_{t-i}/(1-X^*_{t-i}))+\sum_{j=0}^2 W^*_{t-j} \bm{\phi}_j,
\end{equation}
with $[\varphi_0, \varphi_1, \varphi_2, \varphi_3] = [0.5, 0.1,0.2,0.2]$ and $\phi_i = 0.5$ for $i = 0,1,2$. The dispersion parameter $\tau = 100$. The transformed variables are defined as 
\begin{equation}
\label{eq: transformed W_t}
    W_t^* = \min(\max(-10, W_t), 10),
\end{equation}
and 

\begin{equation}
\label{eq: transformed X_t}
    X_{t}^* = \min(\max(0.001, X_t), 0.999).
\end{equation}
That is, the change point exists in $\{W_t\}$.

We take the first 200 samples as the initial observations, so $N = 1.5$. We sample 5 quantiles for this case. That is, $\bm{u} = [0.167, 0.333, 0.5, 0.667, 0.833]^T \in \mathbbm{R}^5$. The corresponding $\bm{x}$ and the covariance matrix $\Gamma(\bm{x})$ can be found accordingly. We then assign the inverse of $\Gamma(\bm{x})$ to $\bm{A}$ and estimate the 95\% threshold $c(\gamma, 0.05)$ that
\begin{equation*}
    P(\underset{1<s \leq N+1}{\sup} \rho^2(s, \gamma)D_C(s ,\bm{x})^T\bm{A}D_C(s ,\bm{x}) \leq c(\gamma,0.05)) = 0.95,
\end{equation*}
for $\gamma  = 0, 0.25, 0.4.$ Meanwhile, we calculate the average distances between the true change point $k^* = 50$ and the detected change points. The results are summarized as follows.

\begin{table}[!h]
\centering
\caption{The nonparametric detection results.}
\begin{tabular}{l|lll}
\hline
$\gamma$                      & 0   & 0.25 & 0.4  \\ \hline
Empirical rejection prob (\%) & 100 & 100  & 100 \\ 
Average distance & 47.15 & 39.03 & 35.40 \\ \hline
\end{tabular}
\label{table: nonparametric works}
\end{table}
As shown in Table \ref{table: nonparametric works}, all change points are detected by the nonparametric method. It verifies the power analysis result in Section \ref{sec: power of the test statistic} that the power converges to 1. As $\gamma$ increases, the average distance between the true change point and the detected change point decreases, i.e. a large $\gamma$ can increase the sensitivity of the detection method.

\newpage
\bibliography{arxiv} 
\bibliographystyle{ieeetr}

\end{document}